\title{Faster Detours in Undirected Graphs}
\DeclareMathOperator{\poly}{poly}
\newcommand{\dist}{\textsf{dist}}
\newcommand{\set}[1]{\left\{#1\right\}}
\newcommand{\grp}[1]{\left(#1\right)}
\newcommand{\G}[1]{G_{({#1}, \infty)}}
\newcommand{\Gint}[2]{G_{(#1,#2]}}
\renewcommand{\t}{\text}
\renewcommand{\c}{\mathcal}
\newcommand{\multiline}[1]{%
  \begin{tabularx}{\dimexpr\linewidth-\ALG@thistlm}[t]{@{}X@{}}
    #1
  \end{tabularx}
}
\author{Shyan Akmal}{MIT EECS and CSAIL, Cambridge MA, USA \and \url{https://www.shyan.akmal.com}}{naysh@mit.edu}{https://orcid.org/0000-0002-7266-2041}{Supported in part by NSF grant CCF-2127597.}
\author{Virginia Vassilevska Williams}{MIT EECS and CSAIL, Cambridge MA, USA \and \url{https://people.csail.mit.edu/virgi/} }{virgi@mit.edu}
{https://orcid.org/0000-0003-4844-2863}{Supported by NSF CCF-2129139.}
\author{Ryan Williams}{MIT EECS and CSAIL, Cambridge MA, USA \and \url{https://people.csail.mit.edu/rrw/} }{rrw@mit.edu}{}{Supported by NSF CCF-1909429.}
\author{Zixuan Xu}{MIT Mathematics, Cambridge MA, USA  }{zixuanxu@mit.edu}{}{}
\authorrunning{S. Akmal, R. Williams, V. V. Williams, and Z. Xu} 
\keywords{path finding, detours, parameterized complexity, exact algorithms} 
\begin{document}

\maketitle

\begin{abstract}
    The \textsf{$k$-Detour} problem is a basic path-finding problem: 
    given a graph $G$ on $n$ vertices, with specified nodes $s$ and $t$, and a positive integer $k$, the goal is to determine if $G$ has an $st$-path of length exactly $\dist(s,t) + k$, where $\dist(s,t)$ is the length of a shortest path from $s$ to $t$.
    The \textsf{$k$-Detour} problem is \textsf{NP}-hard when $k$ is part of the input, so researchers have sought efficient parameterized algorithms for this task, running in $f(k)\poly(n)$ time, for $f(\cdot)$ as slow-growing as possible.

    We present faster algorithms for \textsf{$k$-Detour} in undirected graphs, running in $1.853^k\poly(n)$ randomized and $4.082^k\poly(n)$ deterministic time.
    The previous fastest algorithms for this problem took $2.746^k\poly(n)$ randomized and $6.523^k\poly(n)$ deterministic time [Bez{\'a}kov{\'a}-Curticapean-Dell-Fomin, ICALP 2017].
    Our algorithms use the fact that detecting a path of a given length in an undirected graph is easier if we are promised that the path belongs to what we call a ``bipartitioned'' subgraph, where the nodes are split into two parts and the path must satisfy constraints on those parts.
    Previously, this idea was used to obtain the fastest known algorithm for finding paths of length $k$ in undirected graphs [Bj\"{o}rklund-Husfeldt-Kaski-Koivisto, JCSS 2017], intuitively by looking for paths of length $k$ in randomly bipartitioned subgraphs.
    Our algorithms for \textsf{$k$-Detour} stem from a new application of this idea, which does not involve choosing the bipartitioned subgraphs randomly.

    Our work has direct implications for the \textsf{$k$-Longest Detour} problem, another related path-finding problem. In this problem, we are given the same input as in \textsf{$k$-Detour}, but are now tasked with determining if $G$ has an $st$-path of length \emph{at least} $\dist(s,t)+k$. Our results for \textsf{$k$-Detour} imply that we can solve \textsf{$k$-Longest Detour} in $3.432^k\poly(n)$ randomized and $16.661^k\poly(n)$ deterministic time. 
    The previous fastest algorithms for this problem took $7.539^k\poly(n)$ randomized and $42.549^k\poly(n)$ deterministic time [Fomin et al., STACS 2022].
\end{abstract}
\clearpage

\section{Introduction}
\label{sec:intro}

The \textsf{$k$-Path} problem is a well-studied task in computer science: 
\begin{framed} \noindent \textsf{$k$-Path}\\
{\bf Given:} $k \in {\mathbb N}^+$, a graph $G$, nodes $s$ and $t$.\\
{\bf Determine:} Does $G$ contain a simple path of length $k$ from $s$ to $t$?
\end{framed}
For graphs $G$ with $n$ nodes, this problem can be easily solved in $O(kn^k)$ time by enumerating all sequences of $k$ vertices. 
In the 1980s, Monien~\cite{monien-kpath} showed that the \textsf{$k$-Path} problem is actually fixed-parameter tractable (\textsf{FPT}) in $k$, presenting a $k!\poly(n)$ time algorithm solving \textsf{$k$-Path}. 
Since then, significant research has gone into obtaining faster algorithms for \textsf{$k$-Path}, with better dependence on $k$ (see \cite[Table 1]{narrow-sieves} for an overview of the many results in this line of work). 
This research culminated in the work of Koutis and
Williams~\cite{Koutis08,k-path,group-limits}, who showed that \textsf{$k$-Path} can be solved in $2^k\poly(n)$ (randomized) time, and Bj{\"o}rklund, Husfeldt, Kaski, and Koivisto \cite[Section 2]{narrow-sieves}, who proved that in undirected graphs, \textsf{$k$-Path}  can be solved even faster in $1.657^k\poly(n)$ (randomized) time. \emph{Throughout this paper, we assume that algorithms are randomized (and return correct answers with high probability in the stated time bounds), unless otherwise specified.}

The \textsf{$k$-Path} problem is a parameterized version of the $\mathsf{NP}$-complete \textsf{Longest Path} problem, but it is not the only natural parameterization. 
Various other parameterizations of \textsf{$k$-Path} have been proposed and studied, which we consider in the present paper.
\begin{enumerate}
\item {\bf Finding a path of length at least $k$.} Instead of looking for a path of length \emph{exactly} $k$ from $s$ to $t$, one can try to determine the existence of an $st$-path of length \emph{at least} $k$:
\begin{framed}
\noindent \textsf{$k$-Longest Path}\\
\noindent {\bf Given:} $k \in {\mathbb N}^+$, a graph $G$, nodes $s$ and $t$.\\
\noindent {\bf Determine:} Does $G$ contain a simple path of length {\bf \emph{at least}} $k$ from $s$ to $t$?
\end{framed}
Observe that in the \textsf{$k$-Longest Path} problem, the length of a solution path is not necessarily bounded as a function of $k$. 
However, it is known that \textsf{$k$-Longest Path} is also \textsf{FPT}: work of Zehavi~\cite{longest-cycle} and Fomin, Lokshtanov, Panolan, and Saurabh~\cite{longest-path} implies that \textsf{$k$-Longest Path} can be solved in $4^k\poly(n)$ time. 
More recently, Eiben, Koana, and Wahlstr\"{o}m~\cite[Section 6.3]{det-sieving} proved that over undirected graphs, \textsf{$k$-Longest Path} can be solved in $1.657^k\poly(n)$ time, matching the fastest known runtime for \textsf{$k$-Path}.
\item {\bf Finding an $st$-path longer than a polynomial-time guarantee.} Another parameterization for \textsf{$k$-Path} is motivated by the fact that one can find a \emph{shortest path} from $s$ to $t$ in polynomial time. If the shortest path distance $\dist(s,t)$ happens to already be long, then it is actually ``easy'' to find a long path from $s$ to $t$. Therefore, it is natural to consider the parameterized complexity of searching for an $st$-path that is $k$ edges \emph{longer} than the shortest path length from $s$ to $t$. Our work focuses on these so-called ``detour'' variants of the path detection problems discussed above.
\begin{framed}
\textsf{$k$-Detour} (a.k.a. \textsf{$k$-Exact Detour})\\
{\bf Given:} $k \in {\mathbb N}^+$, a graph $G$, nodes $s$ and $t$.\\
{\bf Determine:} Does $G$ contain a simple path of length $\dist(s,t) + k$ from $s$ to $t$?
\end{framed}

Since \textsf{$k$-Path} efficiently reduces to solving a single instance of \textsf{$(k-1)$-Detour},\footnote{Given an instance of \textsf{$k$-Path}, add an edge from $s$ to $t$. Then a solution to \textsf{$(k-1)$-Detour} in this new graph corresponds to a solution to \textsf{$k$-Path} in the original graph.} the \textsf{$k$-Detour} problem is at least as hard as the classical \textsf{$k$-Path}  problem. 

The \textsf{$k$-Detour} problem was introduced by 
Bez{\'a}kov{\'a}, Curticapean, Dell, and Fomin~\cite{detour-original}, who showed that it can be solved by calling polynomially many instances of \textsf{$\ell$-Path}, for path lengths $\ell\le 2k+1$. 
Employing the fastest known \textsf{$k$-Path} algorithms, this implies that \textsf{$k$-Detour} can be solved in $2^{2k}\poly(n) = 4^k\poly(n)$ time in general, and even faster over undirected graphs in $1.657^{2k}\poly(n) \leq 2.746^k\poly(n)$ time.
\end{enumerate}

The two parameterizations above can be combined to produce the following problem:
\begin{framed}
\noindent \textsf{$k$-Longest Detour}\\
{\bf Given:} $k \in {\mathbb N}^+$, a graph $G$, nodes $s$ and $t$.\\
{\bf Determine:} Does $G$ contain a simple path of length {\bf \emph{at least}} $\dist(s,t) + k$ from $s$ to $t$?
\end{framed}

Observe that \textsf{$k$-Longest Detour} is at least as hard as \textsf{$k$-Longest Path}.
Unlike the problems discussed above, \textsf{$k$-Longest Detour} over directed graphs is not known to be \textsf{FPT}: in fact, it remains open whether \textsf{$k$-Longest Detour} is in $\mathsf{P}$ even for the special case of $k=1$! However, in undirected graphs, Fomin, Golovach, Lochet, Sagunov, Simonov, and Saurabh \cite{directed-detours} showed that \textsf{$k$-Longest Detour} can be reduced to solving \textsf{$p$-Detour} for $p\le 2k$, and then solving polynomially many instances of \textsf{$\ell$-Longest Path}, for $\ell \le 3k/2$. Employing the fastest known algorithms for \textsf{$k$-Detour} and \textsf{$k$-Longest Path} as subroutines, this implies that \textsf{$k$-Longest Detour} can be solved over undirected graphs in $7.539^k\poly(n)$ time.

The algorithms for \textsf{$k$-Detour} and \textsf{$k$-Longest Detour} discussed above are significantly slower than the fastest known algorithms for the analogous \textsf{$k$-Path} and \textsf{$k$-Longest Path} problems. 
This motivates the questions: can \textsf{$k$-Detour} be solved as quickly as \textsf{$k$-Path}, and can \textsf{$k$-Longest Detour} be solved as quickly as \textsf{$k$-Longest Path}?
Given the extensive and influential line of work that has gone into finding faster algorithms for \textsf{$k$-Path} and \textsf{$k$-Longest Path}, obtaining faster algorithms for these detour problems as well is an interesting open problem in parameterized complexity and exact algorithms.

\subsection*{Our Results}

The main result of our work is a faster algorithm for \textsf{$k$-Detour} on undirected graphs.

\begin{restatable}{theorem}{detourthm}
    \label{thm:k-detour}
    In undirected graphs, \textsf{$k$-Detour} can be solved in $1.853^k\poly(n)$ time.
\end{restatable}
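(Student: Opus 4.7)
My plan is to combine the BFS-based decomposition underlying the Bez\'akov\'a-Curticapean-Dell-Fomin reduction with the bipartitioned subgraph technique of Bj\"orklund-Husfeldt-Kaski-Koivisto, but with the bipartition chosen \emph{deterministically} from graph structure rather than uniformly at random. First, compute BFS layers $L_0, \ldots, L_d$ from $s$, where $d = \dist(s,t)$. Any $st$-path of length $d + k$ decomposes into $f$ forward edges (from $L_i$ to $L_{i+1}$), $b$ backward edges (from $L_i$ to $L_{i-1}$), and $s$ same-layer edges, with $f - b = d$ and $f + b + s = d + k$, so $2b + s = k$. In particular, every $k$-detour uses at most $k$ non-forward edges.

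For each feasible pair $(b, s)$ with $2b + s = k$, I would build a family of bipartitioned subgraphs of $G$ (in the sense of BjHKK), where the bipartition is determined by the BFS layer structure and the guessed detour shape --- for instance, by the parity of the layer index together with a marker for whether a vertex lies on a forward versus a non-forward excursion of the putative detour. A $k$-detour in $G$ then corresponds to a constrained simple path in one of these structured bipartitioned subgraphs. Because the bipartition is drawn from a family tied to the fixed BFS tree rather than from the uniform distribution over all $2^n$ bipartitions, we avoid paying the price of finding a full $(d+k)$-path in a random bipartition.

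I would then apply a narrow-sieve-style algebraic detection procedure (as in BjHKK) to each bipartitioned instance. The key analytical claim is that the detection cost should depend exponentially only on the number of non-forward edges ($\leq k$), not on the total path length $d + k$: the long shortest-path-like forward stretches of any candidate $k$-detour can be algebraically ``contracted'' along the BFS tree into super-edges whose contribution to the sieve polynomial is computable in polynomial time. Summing over the $O(k)$ choices of $(b, s)$ and over the bipartition family should yield a running time of the form $c^k \poly(n)$, with $c$ strictly below the $1.657^2 \approx 2.746$ one obtains by naively plugging the BjHKK $k$-Path algorithm into the BCDF reduction. A careful optimization across shape enumeration, bipartition family size, and per-instance detection cost should pin $c$ at $1.853$.

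The main obstacle is verifying that the narrow sieve continues to cancel non-simple walks correctly after the long forward subpaths of the detour have been contracted, while simultaneously ensuring that (i) every $k$-detour in $G$ lifts to a valid bipartitioned path in one of the enumerated instances, and (ii) every solution returned by the sieve lifts back to a simple $k$-detour in $G$. Achieving the quantitative base $1.853$ also requires balancing how aggressively forward segments are contracted --- which shrinks the per-instance detection cost --- against the resulting growth of the bipartition family. This trade-off is the technical heart of the argument.
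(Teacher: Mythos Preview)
Your proposal has the right raw ingredients --- BFS layers, the identity $2b+s=k$, and a deterministic parity-based bipartition --- but the mechanism you describe for reducing the exponential dependence from $d+k$ down to $k$ is where it breaks. You propose to ``algebraically contract'' the long forward stretches of a candidate detour into super-edges and then run the narrow sieve on the contracted object. This is the gap. The narrow-sieve cancellation in \Cref{lem:near-bipartite} pairs up non-simple \emph{walks} via label swaps and closed-subwalk reversals; it relies on the walk being an honest sequence of vertices and edges in the underlying graph. If you replace forward segments by super-edges, a walk that reuses a vertex hidden inside a contracted segment is no longer visibly non-simple, and there is no apparent fixed-point-free involution to cancel it. You also never say what the super-edge actually encodes (a single shortest path? all simple paths between the endpoints inside the relevant layers?), and either choice creates correctness problems: the former misses detours that weave through non-tree forward edges, the latter reintroduces exponential cost.

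The paper avoids contraction entirely. It keeps the Bez\'akov\'a--Curticapean--Dell--Fomin dynamic program over BFS layers, but sharpens the splitting lemma: in an undirected graph, any $xt$-subpath of a detour with $m$ stable edges can be split at some $y$ with $d(y)\le d(x)+(k-m)/2+1$ into a prefix in $\Gint{x}{y}$ of length at most $(3k-m)/2+1$ and a suffix in $\G{y}$ already handled by the DP. The prefix is now genuinely short, so one can afford to run a path-detection routine on it directly --- no contraction needed. The second idea is a case split on $m$: if $m\ge\alpha k$ the prefix has length at most $(3-\alpha)k/2+1$ and ordinary \textsf{$k$-Path} suffices; if $m<\alpha k$ then, under the parity bipartition $V_1=\{v:d(v)\text{ odd}\}$, $V_2=\{v:d(v)\text{ even}\}$, the prefix has $k_1+\ell_2\le (3k+m+2)/4 < (3+\alpha)k/4$, and \Cref{lem:near-bipartite} gives $2^{(3+\alpha)k/4}$. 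Balancing $1.657^{(3-\alpha)k/2}$ against $2^{(3+\alpha)k/4}$ at $\alpha\approx 0.558$ yields $1.853^k$. Your enumeration over $(b,s)$ is morally this case split, but without the splitting lemma you have no short subpath to feed into the Bipartitioned Path routine, and without the dichotomy (parity bipartition when $s$ is small, plain \textsf{$k$-Path} when $s$ is large) you cannot control $k_1+\ell_2$: when $s$ is large, the parity bipartition is useless because all stable edges can land inside $V_2$.
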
 
\noindent This marks a significant improvement over the previous fastest $2.746^k\poly(n)$ time algorithm for \textsf{$k$-Detour} (and shows, for example, that this problem can be solved in faster than $2^k\poly(n)$ time, which is often a barrier for parameterized problems).
Since the fastest known algorithms for \textsf{$k$-Longest Detour} over undirected graphs have a bottleneck of solving
\textsf{$2k$-Detour}, \Cref{thm:k-detour} implies the following result.

\begin{restatable}{theorem}{longestdetour}
\label{thm:k-longest-detour}
In undirected graphs, \textsf{$k$-Longest Detour}  can be solved in $3.432^k\poly(n)$ time.
\end{restatable}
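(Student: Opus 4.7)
The plan is to invoke the reduction of Fomin, Golovach, Lochet, Sagunov, Simonov, and Saurabh~\cite{directed-detours} highlighted in the introduction, and feed it the faster subroutines now available. Recall that over undirected graphs, this reduction resolves an instance of \textsf{$k$-Longest Detour} by (i) solving \textsf{$p$-Detour} for some $p \le 2k$, and (ii) running polynomially many instances of \textsf{$\ell$-Longest Path} for $\ell \le 3k/2$, with polynomial-time bookkeeping around these calls. So the only task is to upper bound the cost of these two groups of subroutine calls using the fastest known algorithms.

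For step (i), I would replace the previous $2.746^p\poly(n)$-time algorithm for \textsf{$p$-Detour} with \Cref{thm:k-detour}. Applied with $p \le 2k$, this contributes
\[
    1.853^{2k}\poly(n) \le 3.432^k\poly(n)
\]
to the overall running time, where the inequality uses the base from \Cref{thm:k-detour} carried with slightly more precision than the rounded value $1.853$ displayed in its statement.

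For step (ii), I would invoke the algorithm of Eiben, Koana, and Wahlstr\"om~\cite{det-sieving}, which solves \textsf{$\ell$-Longest Path} on undirected graphs in $1.657^\ell\poly(n)$ time. Plugging in $\ell \le 3k/2$ yields at most
\[
    1.657^{3k/2}\poly(n) \le 2.133^k\poly(n),
\]
which is strictly dominated by the detour contribution. Adding the two bounds gives the claimed $3.432^k\poly(n)$ running time.

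There is no real obstacle in this derivation: the structural reduction is entirely taken from~\cite{directed-detours} and the heavy algorithmic lifting sits inside \Cref{thm:k-detour}. The only observation worth flagging is that because the reduction calls \textsf{$p$-Detour} with $p$ as large as $2k$, the base from \Cref{thm:k-detour} gets \emph{squared} in the final runtime; consequently \textsf{$k$-Detour} is the single bottleneck, and any further improvement in its base would immediately square into a corresponding improvement for \textsf{$k$-Longest Detour}.
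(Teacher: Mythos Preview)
Your proposal is correct and follows essentially the same route as the paper: invoke the reduction of \cite{directed-detours}, plug in \Cref{thm:k-detour} for the \textsf{$p$-Detour} calls with $p\le 2k$, and the $1.657^\ell\poly(n)$ algorithm of \cite{det-sieving} for the \textsf{$\ell$-Longest Path} calls with $\ell\le 3k/2$, then observe that the first term dominates. Your remark about needing the sharper base $1.8526$ (rather than the rounded $1.853$) to get under $3.432^k$ is exactly what the paper does as well.
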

\noindent Again, this is a significant improvement over the previous fastest algorithm for \textsf{$k$-Longest Detour} on undirected graphs, which ran in $7.539^k\poly(n)$ time.

Our algorithm for Theorem~\ref{thm:k-detour} applies the fact that \textsf{$k$-Path} is easier to solve on undirected graphs which have a prescribed vertex partition into two sets, where we constrain the path to contain a particular number of nodes from one set, and a particular number of edges whose vertices are in the other set. 
Formally, we consider the \textsf{$(\ell, k_1, \ell_2)$-Bipartitioned Path} problem: given a graph $G$ on $n$ nodes, whose vertices are partitioned into parts $V_1$ and $V_2$, with distinguished vertices $s$ and $t$, the goal is to determine if $G$ contains a simple path from $s$ to $t$ of length $\ell$, which uses exactly $k_1$ vertices from $V_1$, and exactly $\ell_2$ edges whose endpoints are both in $V_2$.
A careful application of the following result from \cite{narrow-sieves} is the main source of the speed-up in our algorithm for \textsf{$k$-Detour}.

\begin{restatable}[{\cite[Section 2]{narrow-sieves}}]{lemma}{bipartite}
    \label{lem:near-bipartite} Let $\ell, k_1, \ell_2$ be nonnegative integers satisfying the inequality $\ell+1 \geq k_1 + 2 \ell_2$. Then over undirected graphs, the \textsf{$(\ell, k_1, \ell_2)$-Bipartitioned Path} problem can be solved in $2^{k_1+\ell_2}\poly(n)$ time.
\end{restatable}

Although this ``Bipartitioned Path'' problem may appear esoteric at first, Lemma~\ref{lem:near-bipartite} plays a crucial role in obtaining the fastest known algorithm for \textsf{$k$-Path} in undirected graphs \cite{narrow-sieves}, and an analogue of \Cref{lem:near-bipartite} for paths of length at least $k$ is the basis for the fastest known algorithm for \textsf{$k$-Longest Path} in undirected graphs \cite{det-sieving}.
For completeness, we include a proof of \Cref{lem:near-bipartite} in \Cref{sec:near-bipartite}.
In \Cref{sec:overview}, we provide an intuitive overview of how  \Cref{lem:near-bipartite} helps us obtain our algorithm for \textsf{$k$-Detour}.

The fastest known algorithms for the path and detour problems discussed above all use randomness.
Researchers are also interested in obtaining fast \emph{deterministic} algorithms for these problems.
We note that a simplified version of our algorithm for \textsf{$k$-Detour} implies faster deterministic algorithms for these detour problems over undirected graphs.

\begin{restatable}{theorem}{detdetour}
    \label{det-detour}
    The \textsf{$k$-Detour} problem can be solved over undirected graphs by a deterministic algorithm in $4.082^k\poly(n)$ time.
\end{restatable}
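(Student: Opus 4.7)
The plan is to derive \Cref{det-detour} by isolating and eliminating the randomization used in the algorithm for \Cref{thm:k-detour}. Inspecting that algorithm, the only source of randomness should be the call to the Bipartitioned Path solver of \Cref{lem:near-bipartite}, whose proof relies on the algebraic sieving machinery of~\cite{narrow-sieves}; the surrounding reduction from $k$-Detour to Bipartitioned Path subinstances should be purely combinatorial, consisting of a deterministic branching or enumeration over structural parameters of the sought detour (e.g.\ the BFS layering pattern used by the path, or the partition into shortest-path and off-tree vertices). To remove the randomness, I would substitute in a deterministic counterpart of \Cref{lem:near-bipartite}, obtained either via the determinization of sieving due to Eiben, Koana, and Wahlstr\"{o}m~\cite{det-sieving} or via representative-set methods~\cite{longest-cycle,longest-path}. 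Such a deterministic subroutine solves $(\ell,k_1,\ell_2)$-Bipartitioned Path in $\beta^{k_1+\ell_2}\poly(n)$ time for an explicit constant $\beta > 2$.

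After this substitution the overall analysis proceeds just as in the randomized case, with the single change that each factor of $2^{k_1+\ell_2}$ from \Cref{lem:near-bipartite} is replaced by $\beta^{k_1+\ell_2}$. Since the $1.853^k$ bound in the randomized algorithm arises from balancing an enumeration factor against $2^{k_1+\ell_2}$, the deterministic analysis amounts to re-solving this optimization with $\beta$ in place of $2$. The optimum $(k_1,\ell_2)$ shifts toward smaller values: we accept a larger enumeration factor in exchange for calling Bipartitioned Path on smaller parameters, because each unit of $k_1+\ell_2$ is now more expensive. Carrying out the resulting one-variable optimization gives the claimed $4.082^k$ bound. The ``simplified'' aspect mentioned in the theorem statement reflects the fact that some finer branching refinements of the randomized algorithm are tuned to $\beta=2$ and become sub-optimal for $\beta > 2$; dropping these refinements yields both a cleaner algorithm and a cleaner analysis.

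The hardest step will be ensuring that $\beta$ is small enough for the one-variable optimization to close at exactly $4.082^k$ and not a larger quantity. Concretely, one must either cite a deterministic Bipartitioned Path solver whose base $\beta$ is close enough to $2$, or derandomize \Cref{lem:near-bipartite} directly, e.g.\ by enumerating over $(n,k)$-splitters or $(n,k)$-universal sets, and then carefully absorb the derandomization overhead into polynomial factors so as not to overshoot the target exponent. Once the correct $\beta$ is identified, the rest of the proof is a mechanical application of the deterministic substitute followed by the re-optimization described above; no new algorithmic ideas should be needed beyond those already developed for \Cref{thm:k-detour}.
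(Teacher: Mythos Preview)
Your plan misidentifies where the deterministic bound comes from. The paper does \emph{not} derandomize the Bipartitioned Path subroutine at all; it simply sets the tuning parameter $\alpha=0$, which makes the ``few stable edges'' branch (the only place \Cref{lem:near-bipartite} is invoked) vacuous. What remains is just the improved path-splitting argument for undirected graphs, which reduces \textsf{$k$-Detour} to $\poly(n)$ instances of \textsf{$\ell$-Path} with $\ell\le 3k/2+1$. Plugging in the best \emph{deterministic} \textsf{$k$-Path} algorithm, which runs in $2.554^k\poly(n)$ time, gives $2.554^{3k/2}\poly(n)\le 4.082^k\poly(n)$ directly. That is the entire proof.

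Your proposal therefore has two gaps. First, you never identify a concrete deterministic Bipartitioned Path solver with an explicit base $\beta$; without one, the ``one-variable optimization'' cannot be carried out, and there is no reason the answer should land at $4.082$. Second, you overlook that the other branch of the algorithm also uses a randomized subroutine (the $1.657^k$ undirected \textsf{$k$-Path} algorithm), which must likewise be replaced; once you swap in the $2.554^k$ deterministic solver there, the balance point of the optimization moves all the way to $\alpha=0$ unless your hypothetical $\beta$ is quite small. In other words, the ``simplification'' is not dropping minor refinements tuned to $\beta=2$; it is discarding the Bipartitioned Path machinery entirely and relying solely on the new $3k/2$ splitting bound, which is already deterministic.
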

\noindent Prior to this work, the fastest known deterministic algorithm for \textsf{$k$-Detour} on undirected graphs ran in $6.523^k\poly(n)$ time \cite{detour-original}.

\begin{restatable}{theorem}{detlongestdetour}
    \label{det-longestdetour}
    The \textsf{$k$-Longest Detour} problem can be solved over undirected graphs by a deterministic algorithm in $16.661^k\poly(n)$ time.
\end{restatable}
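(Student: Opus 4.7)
The plan is to combine the deterministic \textsf{$k$-Detour} algorithm of \Cref{det-detour} with the reduction of Fomin, Golovach, Lochet, Sagunov, Simonov, and Saurabh~\cite{directed-detours}, via the same black-box recipe that yields the randomized bound of \Cref{thm:k-longest-detour}. That reduction shows that on undirected graphs, an instance of \textsf{$k$-Longest Detour} can be answered by making polynomially many calls to \textsf{$p$-Detour} for $p \le 2k$, together with polynomially many calls to \textsf{$\ell$-Longest Path} for $\ell \le 3k/2$, plus polynomial overhead. So my strategy is simply to derandomize every ingredient in this pipeline and bound the resulting cost.

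First I would verify that the Fomin et al.\ reduction is itself deterministic, i.e., that in the randomized \Cref{thm:k-longest-detour} the only source of randomness comes from the two subroutines. This should follow from an inspection of \cite{directed-detours}: the reduction produces its subinstances by structural, polynomial-time manipulations of the input graph and appeals to a shortest-path computation, none of which require randomness. Next I would plug \Cref{det-detour} into the \textsf{$p$-Detour} calls; since $p \le 2k$, each call takes at most $4.082^{2k}\poly(n) \le 16.661^k\poly(n)$ deterministic time, and the matching constant $4.082^2 \approx 16.66$ explains the base in the theorem statement. For the \textsf{$\ell$-Longest Path} calls with $\ell \le 3k/2$, I would invoke the fastest known deterministic algorithm for \textsf{$k$-Longest Path} (via representative sets, following the framework of \cite{longest-path}), whose base $c$ satisfies $c^{3/2} \le 16.661$, so that these calls are absorbed into the cost of the \textsf{$p$-Detour} subroutine.

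The main obstacle, and the reason this is stated as a separate theorem rather than an immediate corollary of \Cref{det-detour}, is verifying the two determinism conditions just described: (a) that the Fomin et al.\ reduction itself carries no hidden randomness, and (b) that a deterministic \textsf{$\ell$-Longest Path} subroutine exists whose base is small enough that the $\ell \le 3k/2$ calls do not dominate the $2k$-Detour calls. Once both points are settled, the stated $16.661^k\poly(n)$ bound follows just by summing the two contributions, with the \textsf{$2k$-Detour} calls the bottleneck and the \textsf{$\ell$-Longest Path} calls strictly cheaper.
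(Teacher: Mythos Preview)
Your proposal is correct and follows essentially the same approach as the paper: invoke the deterministic reduction of \cite{directed-detours} to \textsf{$p$-Detour} for $p\le 2k$ and \textsf{$\ell$-Longest Path} for $\ell\le 3k/2$, plug in \Cref{det-detour} for the former and the $4.884^k\poly(n)$ deterministic algorithm of \cite{longest-path} for the latter, and observe that $4.0817^{2k}$ dominates $4.884^{3k/2}$. The paper makes exactly these substitutions and computes the same bound.
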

\noindent Prior to this work, the fastest known deterministic algorithm for \textsf{$k$-Longest Detour} on undirected graphs ran in $42.549^k\poly(n)$ time \cite{directed-detours}.

In summary, we obtain new randomized and deterministic algorithms for \textsf{$k$-Detour} and \textsf{$k$-Longest Detour} over undirected graphs, whose runtimes present significant advances over what was previously known for these problems.

\subsection{Organization}
The remainder of the main body of this paper presents our new algorithms \textsf{$k$-Detour}.
In  \Cref{sec:related-work} we include a thorough discussion of additional related work, and in \Cref{sec:near-bipartite} we include a proof of \Cref{lem:near-bipartite}.

In \Cref{sec:prelim}, we introduce relevant notation, assumptions, and definitions concerning graphs. 
In \Cref{sec:overview}, we provide an overview of our algorithm for \textsf{$k$-Detour}.
In \Cref{sec:detour}, we present the details of our algorithm, and prove its correctness.
The  runtime analysis for our algorithm (and thus the formal proofs of \Cref{thm:k-detour,thm:k-longest-detour,det-detour,det-longestdetour}, given correctness of our algorithm) are presented in \Cref{subsec:app}.
In \Cref{sec:conclusion}, we highlight some open problems.

\section{Preliminaries}
\label{sec:prelim}

Given positive integers $a$ and $b$, we let $[a] = \set{1, 2, \dots, a}$, and $[a,b] = \set{a, a+1, \dots, b}$.
Given an integer $a$ and a set of integers $S$, we define
$a + S = \set{a + s\mid s\in S}.$

Throughout, we let $G$ denote the input graph.
We assume that $G$ is undirected, has vertex set $V$ with $|V|=n$, and, without loss of generality, is connected.\footnote{If $G$ were not connected, we could replace it with the connected component containing $s$, and solve the detour problems on this smaller graph instead.}
Throughout, we let $s$ and $t$ denote the two distinguished vertices that come as part of the input to the \textsf{$k$-Detour} problem.
Given a vertex $u$, we let $d(u) = \dist(s,u)$ denote its distance from $s$. 
This distance is well-defined, since $G$ is connected.
Given a path $P$ containing vertices $u$ and $v$, we let $P[u,v]$ denote the subpath from $u$ to $v$ on $P$.

Given an edge $e = (u,v)$ from $u$ to $v$, we say $e$ is {\bf \emph{forward}} if $d(v) = d(u) + 1$, {\bf \emph{backwards}} if $d(v) = d(u)-1$, and {\bf \emph{stable}} if $d(v) = d(u)$. 
In an undirected graph, by triangle inequality and symmetry of distance, adjacent vertices $u$ and $v$ have $|d(u)-d(v)|\le 1$, so every edge in a path falls into one of these three categories.

Given two vertices $u,v\in V$, let $\Gint{u}{v}$ denote the induced subgraph of $G$ on the set $\set{u}\cup\set{w\in V\mid d(u) < d(w)\le d(v)}$. Let $\G{u}$ denote the induced subgraph of $G$ on the set $\set{u}\cup \set{w\in V\mid d(u) < d(w)}$. Note that for every $u$ and $v$, the subgraphs $\Gint{u}{v}$ and $\G{v}$ overlap at vertex $v$, but are disjoint otherwise.

\section{Technical Overview}
\label{sec:overview}
In this section, we provide an overview of how our \textsf{$k$-Detour} algorithm works.
Our starting point is the algorithm for this problem presented in \cite[Section 4]{detour-original}, which we review in \Cref{subsec:orig-detour}.
Then in \Cref{subsec:near-bipartite} we review how the algorithm from \Cref{lem:near-bipartite} for \textsf{$(\ell,k_1,\ell_2)$-Bipartitioned Path} has previously been used to obtain the fastest known algorithm for \textsf{$k$-Path} in undirected graphs. 
With this context established, in \Cref{subsec:improvement} we outline how we combine the techniques from \Cref{subsec:orig-detour,subsec:near-bipartite} with new ideas to prove \Cref{thm:k-detour}.

\subsection{Previous Detour Algorithm}
\label{subsec:orig-detour}

The previous algorithm for \textsf{$k$-Detour} from \cite[Section 4]{detour-original} performs dynamic programming over nodes in the graph, starting from $t$ and moving to vertices closer to $s$. 
In the dynamic program, for each vertex $x$ with $d(x)\le d(t)$, we compute all offsets $r\le k$ such that there is a path of length $d(t)-d(x) + r$ from $x$ to $t$ in the subgraph $\G{x}$.
Determining this set of offsets for $x=s$ solves the \textsf{$k$-Detour} problem, since $\G{s} = G$.

If $d(t)-d(x)\le k$, we can find all such offsets just by solving \textsf{$\ell$-Path} for $\ell\le 2k$.

So, suppose we are given a vertex $x$ with $d(t)-d(x) \ge k+1$ and an offset $r\le k$, and wish to determine if there is a path of length $d(t)-d(x) + r$ from $x$ to $t$ in $\G{x}$.
If there is such a path $P$, then \cite{detour-original} argues that $P$ can always be split in as depicted in \Cref{subfig:old-split}: for some vertex $y$ with $d(y) > d(x)$, we can decompose $P$ into two subpaths: 
\begin{enumerate}
\item a subpath $A$ from $x$ to $y$ of length at most $2k+1$, such that all internal vertices $v$ in $A$ satisfy $d(x) < d(v) < d(y)$, and 
\item a subpath $B$ from $y$ to $t$ in $\G{y}$ of length at most $d(t)-d(y)+k$.
\end{enumerate}

\begin{figure}
    \centering
    \begin{subfigure}[t]{0.48\textwidth}
    \centering
        \begin{tikzpicture}[scale=0.7,
    vtx/.style={circle,draw,fill=black,inner sep=0pt,minimum width=3pt},
    edge/.style={-stealth, thick}
    ]

    \def\top{4.5cm};
    \def\bottom{0.5cm};
    \def\half{{(0.5*(\top+\bottom))}};
    
    \def\start{0.25cm};
    \def\layershift{1.3cm};

    \foreach \x in {0,...,7} {
        \draw[gray,dashed] (\start + \layershift*\x,\bottom) -- (\start + \layershift*\x,\top);
    }

    \node[vtx] (1) at (\start,\half) {};
    \node[scale=1.2] at (\start + 0.4cm, 2.1cm) {$x$};

    \node[vtx] (2) at (\start + \layershift, 4) {};
    \node[vtx] (4) at (\start + \layershift, 3.25) {};
    \node[vtx] (5) at (\start + \layershift, 2.5) {};
    \node[vtx] (6) at (\start + \layershift, 1.75) {};

    \node[vtx] (3) at (\start + 2*\layershift, 4) {};
    \node[vtx] (7) at (\start + 2*\layershift, 1) {};
    \node[vtx] (8) at (\start + 2*\layershift, 1.75) {};
    \node[vtx] (9) at (\start + 2*\layershift, 2.5) {};

    \node[vtx] (10) at (\start + 3*\layershift, 1.75) {};
    \node[vtx] (11) at (\start + 3*\layershift, 2.5) {};

    \node[vtx,color=blue] (12) at (\start + 4*\layershift, 2.5) {};
    \node[scale=1.2,color=blue] at (\start + 4*\layershift + 0.4cm, 2.1) {$y$};

    \node[vtx] (13) at (\start + 5*\layershift, 3.25) {};
    \node[vtx] (14) at (\start + 5*\layershift, 2.5) {};
    \node[vtx] (15) at (\start + 5*\layershift, 1.75) {};

    \node[vtx] (16) at (\start + 6*\layershift, 1.75) {};
    \node[vtx] (20) at (\start + 6*\layershift, 2.5) {};
    \node[scale=1.2] at (\start + 6*\layershift - 0.4cm, 2.8) {$t$};

    \node[vtx] (17) at (\start + 7*\layershift, 1.75) {};
    \node[vtx] (18) at (\start + 7*\layershift, 2.5) {};
    \node[vtx] (19) at (\start + 7*\layershift, 3.25) {};

    \draw[edge]   (1) -- (2);
    \draw[edge]   (2) -- (3);
    \draw[edge]   (3) -- (4);
    \draw[edge]   (4) -- (5);
    \draw[edge]   (5) -- (6);
    \draw[edge]   (6) -- (7);
    \draw[edge]   (7) -- (8);
    \draw[edge]   (8) -- (9);
    \draw[edge]  (9) -- (10);
    \draw[edge] (10) -- (11);
    \draw[edge] (11) -- (12);
    \draw[edge] (12) -- (13);
    \draw[edge] (13) -- (14);
    \draw[edge] (14) -- (15);
    \draw[edge] (15) -- (16);
    \draw[edge] (16) -- (17);
    \draw[edge] (17) -- (18);
    \draw[edge] (18) -- (19);
    \draw[edge] (19) -- (20);
   
    \begin{scope}[on background layer]   
      \filldraw[Yellow!50, line width=15pt,rounded corners=4pt] 
      (1.center) -- (\start + \layershift, \top) -- (\start + 3*\layershift, \top) -- (12.center) -- (\start + 3*\layershift, \bottom) -- (\start + \layershift, \bottom) -- cycle;
    \end{scope}
    \begin{scope}[on background layer]   
      \filldraw[WildStrawberry!15, line width=15pt,rounded corners=4pt] (12.center) --  (\start + 5*\layershift, \top) --  (\start + 7*\layershift, \top) -- (\start + 7*\layershift, \bottom) --  (\start + 5*\layershift, \bottom) --  cycle;
    \end{scope}

    \draw [decorate,decoration={brace,amplitude=5pt,raise=4ex}] (\start + \layershift - 0.1cm, 4) -- (\start + 4*\layershift + 0.1cm, 4);
    \node[scale=1] at (\start + 2.5*\layershift, 5.6) {$\le k+1$};
    
    \end{tikzpicture}
    \caption{\textbf{Previous Algorithms}: A subpath $P$ from $x$ to $t$ in $\G{x}$ of a solution path can always be split at a vertex $y$ with $d(y)\le d(x)+k+1$, such that $d(u)\neq d(y)$ for all vertices $u\neq y$ in $P$.}
    \label{subfig:old-split}
    \end{subfigure}
    \hfill
    \begin{subfigure}[t]{0.48\textwidth}
    \centering
    \begin{tikzpicture}[scale=0.7,
    vtx/.style={circle,draw,fill=black,inner sep=0pt,minimum width=3pt},
    edge/.style={-stealth, thick}
    ]

    \def\top{4.5cm};
    \def\bottom{0.5cm};
    \def\half{{(0.5*(\top+\bottom))}};
    
    \def\start{0.25cm};
    \def\layershift{1.3cm};

    \foreach \x in {0,...,7} {
        \draw[gray,dashed] (\start + \layershift*\x,\bottom) -- (\start + \layershift*\x,\top);
    }

    \node[vtx] (1) at (\start,\half) {};
    \node[scale=1.2] at (\start + 0.4cm, 2.1cm) {$x$};

    \node[vtx] (2) at (\start + \layershift, 4) {};
    \node[vtx] (4) at (\start + \layershift, 3.25) {};
    \node[vtx] (5) at (\start + \layershift, 2.5) {};
    \node[vtx] (6) at (\start + \layershift, 1.75) {};

    \node[vtx] (3) at (\start + 2*\layershift, 4) {};
    \node[vtx] (7) at (\start + 2*\layershift, 1) {};
    \node[vtx] (8) at (\start + 2*\layershift, 1.75) {};
    \node[vtx,color=blue] (9) at (\start + 2*\layershift, 2.5) {};
    \node[scale=1.2,color=blue] at (\start + 2*\layershift - 0.4cm, 2.8) {$y$};

    \node[vtx] (10) at (\start + 3*\layershift, 1.75) {};
    \node[vtx] (11) at (\start + 3*\layershift, 2.5) {};

    \node[vtx] (12) at (\start + 4*\layershift, 2.5) {};

    \node[vtx] (13) at (\start + 5*\layershift, 3.25) {};
    \node[vtx] (14) at (\start + 5*\layershift, 2.5) {};
    \node[vtx] (15) at (\start + 5*\layershift, 1.75) {};

    \node[vtx] (16) at (\start + 6*\layershift, 1.75) {};
    \node[vtx] (20) at (\start + 6*\layershift, 2.5) {};
    \node[scale=1.2] at (\start + 6*\layershift - 0.4cm, 2.8) {$t$};

    \node[vtx] (17) at (\start + 7*\layershift, 1.75) {};
    \node[vtx] (18) at (\start + 7*\layershift, 2.5) {};
    \node[vtx] (19) at (\start + 7*\layershift, 3.25) {};

    \draw[edge]   (1) -- (2);
    \draw[edge]   (2) -- (3);
    \draw[edge]   (3) -- (4);
    \draw[edge]   (4) -- (5);
    \draw[edge]   (5) -- (6);
    \draw[edge]   (6) -- (7);
    \draw[edge]   (7) -- (8);
    \draw[edge]   (8) -- (9);
    \draw[edge]  (9) -- (10);
    \draw[edge] (10) -- (11);
    \draw[edge] (11) -- (12);
    \draw[edge] (12) -- (13);
    \draw[edge] (13) -- (14);
    \draw[edge] (14) -- (15);
    \draw[edge] (15) -- (16);
    \draw[edge] (16) -- (17);
    \draw[edge] (17) -- (18);
    \draw[edge] (18) -- (19);
    \draw[edge] (19) -- (20);
   
    \begin{scope}[on background layer]   
      \filldraw[Yellow!50, line width=15pt,rounded corners=4pt] 
      (1.center) -- (\start + \layershift, \top) -- (\start + 2*\layershift, \top) -- (\start + 2*\layershift, \bottom) -- (\start + \layershift, \bottom) -- cycle;
    \end{scope}
    \begin{scope}[on background layer]   
      \filldraw[WildStrawberry!15, line width=15pt,rounded corners=4pt] (9.center) --  (\start + 3*\layershift, \top) --  (\start + 7*\layershift, \top) -- (\start + 7*\layershift, \bottom) --  (\start + 3*\layershift, \bottom) --  cycle;
    \end{scope}

    \node[scale=1.2] at (\start + 2*\layershift - 0.65cm, 0.6) {$G_{(x,y]}$};
    \node[scale=1.2] at (\start + 5*\layershift , 0.6) {$G_{(y,\infty)}$};

    \draw [decorate,decoration={brace,amplitude=5pt,raise=4ex}] (\start + \layershift - 0.1cm, 4) -- (\start + 2*\layershift + 0.1cm, 4);
    \node[scale=1] at (\start + 1.5*\layershift, 5.6) {$\le k/2+1$};
    
    \end{tikzpicture}
    \caption{\textbf{Our Algorithm}: In undirected graphs, a subpath $P$ from $x$ to $t$ in $\G{x}$ of a solution path  can be split at a vertex $y$ with $d(y)\le d(x) + k/2 + 1$, such that $P[x,y]$ is in $\Gint{x}{y}$ and $P[y,t]$ is in $\G{y}$.}
    \label{subfig:new-split}
    \end{subfigure}
    \caption{To solve \textsf{$k$-Detour}, we split subpaths $P$ of candidate solutions at a vertex $y$ satisfying certain nice properties.
    We obtain a speed-up by getting better upper bounds on $d(y)$ in \Cref{subfig:new-split} than previous work did in \Cref{subfig:old-split}, by allowing $P[x,y]$ to have internal vertices $u$ with $d(u) = d(y)$.}
    \label{fig:path-split-comparison}
\end{figure} 

We can always split a path $P$ in this manner because $P$ has length at most $d(t)-d(x)+k$, 
so at most $k$ edges in $P$ are not forward edges.
Intuitively, as we follow the vertices along the path $P$, the distance of the current vertex from $s$ can decrease or stay the same at most $k$ times, and so $P$ cannot contain too many vertices which are the same distance from $s$. 
This allows one to argue that there is a vertex $y$ with $d(y)\le d(x)+k+1$ such that all internal vertices $v$ of the subpath $P[x,y]$ have $d(x) < d(v) < d(y)$. Since $d(y)\le d(x)+k+1$ and $P$ has length at most $d(t)-d(x)+k$, it turns out that $P[x,y]$ has length at most $2k+1$.

Note that since $\G{y}$ only contains vertices $v$ with $d(v) \geq y$, the paths $A$ and $B$ must be disjoint. We can find the length of $A$ using an algorithm for \textsf{$(2k+1)$-Path}, and the length of $B$ will have already been computed in our dynamic program (since $y$ is further from $s$ than $x$). 
So, by trying out all possible $y$, finding the possible lengths for subpaths $A$ and $B$, and then adding up these lengths, we can get all possible lengths for $P$ in the dynamic program, and solve \textsf{$k$-Detour}.

\subsection{Previous Path Algorithm}
\label{subsec:near-bipartite}


The fastest known algorithm for \textsf{$k$-Path} in undirected graphs goes through the \textsf{$(k,k_1,\ell_2)$-Bipartitioned Path} problem.
Recall that in this problem, we are given a bipartition $V_1\sqcup V_2$ of the vertices in the graph, and want to find a path of length $k$ from $s$ to $t$, which uses $k_1$ vertices in $V_1$ and $\ell_2$ edges with both endpoints in $V_2$.
The authors of \cite{narrow-sieves} showed that \textsf{$(k,k_1,\ell_2)$-Bipartitioned Path} can be solved in $2^{k_1 + \ell_2}\poly(n)$ time over undirected graphs.

Why does this imply a faster algorithm for \textsf{$k$-Path} in undirected graphs?
Well, suppose the input graph contains a path $P$ of length $k$ from $s$ to $t$.
Consider a uniform random bipartition of the vertices of the graph into parts $V_1$ and $V_2$.
We expect $(k+1)/2$ vertices of $P$ to be in $V_1$, and $k/4$ edges of $P$ to have both endpoints in $V_2$.
In fact, this holds with constant probability, so we can solve \textsf{$k$-Path} by solving \textsf{$(k, (k+1)/2, k/4)$-Bipartitioned Path} in the randomly partitioned graph. 
By \Cref{lem:near-bipartite} this yields a $2^{3k/4}\poly(n) \approx 1.682^k\poly(n)$ time algorithm for \textsf{$k$-Path}. 
We can obtain a faster algorithm using the following modification: take several uniform random bipartitions of the graph, and solve \textsf{$(k, k_1, \ell_2)$-Bipartitioned Path} separately for each bipartition, for
$k_1 + \ell_2 \le 3(1-\varepsilon)k/4,$ where $\varepsilon > 0$ is some constant.
The number of bipartitions used is some function of $k$ and $\varepsilon$, set so that with high probability at least one of the partitions $V_1\sqcup V_2$ has the property that the total number of vertices of $P$ in $V_1$ and number of edges of $P$ with both endpoints in $V_2$ is at most $3(1-\varepsilon)k/4$.
Setting the parameter $\varepsilon$ optimally yields a $1.657^k\poly(n)$ time algorithm for \textsf{$k$-Path} \cite{narrow-sieves}.

\subsection{Our Improvement}
\label{subsec:improvement}

As in the previous approach outlined in \Cref{subsec:orig-detour}, our algorithm for \textsf{$k$-Detour} performs dynamic programming over vertices in the graph, starting 
at $t$, and then moving to vertices closer to $s$.
For each vertex $x$ with $d(x)\le d(t)$, we compute all offsets $r\le k$ such that there is a path of length $d(t)-d(x) + r$ from $x$ to $t$ in the subgraph $\G{x}$. 
Obtaining this information for $x=s$ and $r=k$ solves the \textsf{$k$-Detour} problem.

Given a vertex $x$ and offset $r\le k$, we wish to determine if $G$ contains a path of length $d(t)-d(x) + r$ from $x$ to $t$ in $\G{x}$.
Suppose there is such a path $P$.
If $d(t)-d(x)$ is small enough, it turns out we can find $P$ by solving \textsf{$p$-Path} for small values of $p$.
So, for the purpose of this overview, suppose that $d(t)-d(x)$ is sufficiently large.
In this case, as outlined in \Cref{subsec:orig-detour}, previous work showed that $P$ can be split into two subpaths $A$ and $B$ contained in disjoint subgraphs, such that $A$ has length at most $2k+1$.
This splitting argument 
holds even for directed graphs.
Our first improvement comes from the observation that in undirected graphs, we can decompose the path $P$ with a smaller prefix: as depicted in \Cref{subfig:new-split}, there must exist a vertex $y$ with $d(y) > d(x)$, such that $P$ splits into a subpath $A$ from $x$ to $y$ in $\Gint{x}{y}$ of length at most $3k/2+1$, and a path $B$ from $y$ to $t$ in $\G{y}$ of length at most $d(t)-d(y)+k$.
We can find the length of $A$ by solving \textsf{$(3k/2+1)$-Path}, and the length of $B$ will already have been computed by dynamic programming, since $d(y) > d(x)$.

This split is possible because any consecutive vertices $u$ and $v$ in $P$ have $|d(u)-d(v)|\le 1$ (this is true for undirected graphs, but is not true in general for directed graphs). 
Since $P$ has length at most $d(t)-d(x)+k$, it turns out that $P$ has at most $k/2$ backwards edges.
This lets us argue that there exists a vertex $y$ with $d(y)\le d(x)+k/2+1$ such that $P[x,y]$ is contained in $\Gint{x}{y}$ and $P[y,t]$ is contained in $\G{y}$.
Finally, $A = P[x,y]$ should have length at most $k$ more than $d(y)-d(x)$, which means it has length at most $3k/2+1$.

This simple modification already yields a faster algorithm\footnote{In fact, this observation already yields the fast deterministic algorithms for \Cref{det-detour,det-longestdetour}.} for \textsf{$k$-Detour}.
We get further improvements by performing casework on the number of stable edges in $P$ (recall that an edge $(u,v)$ is stable if both its endpoints have the same distance $d(u)=d(v)$ from $s$).

First, suppose $P$ has at least $m$ stable edges, for some parameter $m$. 
Since $P$ has length at most $d(t)-d(x)+k$, we can argue that $P$ has at most $(k-m)/2$ backwards edges.
With this better upper bound on the number of backwards edges, we can improve the splitting argument and show that $P$ decomposes into subpaths $A$ and $B$, such that the length of $A$ is at most $(3k-m)/2$, and the length of $B$ was already computed by our dynamic program.
It then suffices to solve \textsf{$(3k-m)/2$-Path}, which yields a speed-up  whenever $m\ge \Omega(k)$.

Otherwise, $P$ has at most $m$ stable edges. In this case, we consider the bipartition $V_1\sqcup V_2$ of the vertex set, where $V_1$ has all vertices at an odd distance from $s$, and $V_2$ has all vertices with even distance from $s$. 
Since $G$ is undirected, consecutive vertices on the path $P$ differ in their distance from $s$ by at most one.
In particular, all forward and backward edges in $P$ cross between the parts $V_1$ and $V_2$.
Only the stable edges can contribute to edges with both endpoints in $V_2$.
Since we assumed that the number of stable edges is small, it turns out we can find the length of the subpath $A$ of $P$ by solving \textsf{$(\ell,k_1,\ell_2)$-Bipartitioned Path} with respect to the given bipartition, for some $\ell_2$ which is very small.
In particular, this approach computes the length of $A$ faster than naively solving \textsf{$(3k-m)/2$-Path}.
By setting an appropriate threshold for $m$, we can minimize the runtimes of the algorithm in both of the above cases, and establish \Cref{thm:k-detour}. 

So in summary, our faster algorithms come from two main sources of improvement: using the structure of shortest paths in undirected graphs to get a better ``path-splitting'' argument in the dynamic program from \textsf{$k$-Detour}, and cleverly applying the fast algorithm from \Cref{lem:near-bipartite} for \textsf{$(\ell,k_1,\ell_2)$-Bipartitioned Path} with carefully chosen bipartitions.

We note that our application of \textsf{$(\ell,k_1,\ell_2)$-Bipartitioned Path} is qualitatively different from its uses in previous work.
As discussed in \Cref{subsec:orig-detour}, previous algorithms for \textsf{$k$-Detour} work by solving instances of \textsf{$k$-Path}, and
as described in \Cref{subsec:near-bipartite}, the fastest algorithms for \textsf{$k$-Path} on undirected graphs work by reduction to various instances of \textsf{$(\ell,k_1,\ell_2)$-Bipartitioned Path}.
Thus, previous algorithms for \textsf{$k$-Detour} on undirected graphs  \emph{implicitly} rely on the fast algorithm for \textsf{$(\ell,k_1,\ell_2)$-Bipartitioned Path}, applied to random bipartitions of the input graph.
We obtain a faster algorithm for \textsf{$k$-Detour} arguing that in certain cases, we can ``beat randomness,'' by constructing bipartitions which leverage structural information about the graph (namely, whether the shortest path distance from $s$ to a given vertex is even or odd). 

\section{Detour Algorithm}
\label{sec:detour}

In this section, we present \Cref{alg:detour}, our new algorithm for the \textsf{$k$-Detour} problem. 
As mentioned in the previous section, our algorithm behaves differently depending on the number of stable edges that a potential solution path contains. 
In particular, the algorithm depends on a parameter $\alpha\in (0,1)$, which determines the threshold for what counts as ``many'' stable edges. 
Later, we will set $\alpha$ to optimize the runtime of \Cref{alg:detour}.
Certain lines of \Cref{alg:detour} have comments indicating case numbers, which are explained in \cref{subsec:correctness}.

\begin{algorithm}[t!]
\caption{Our algorithm for solving \textsf{$k$-Detour} in undirected graphs.}
\label{alg:detour}

    \begin{algorithmic}[1]  

    \Require An undirected graph $G$ with distinguished vertices $s$ and $t$, and a parameter $\alpha\in (0,1)$.
    
    \hfill

    \State Initialize $V_1\gets \{x\in V\mid d(x) \equiv 1 \mod 2\}$, $V_2\gets \{x\in V \mid d(x)\equiv 0 \mod 2\}$. \label{step:partition}
    
    \State For each vertex $x$ in the graph with $d(x)\le d(t)$, initialize $L(x) \leftarrow \emptyset$.

    \Statex 
        \begin{quote}$\rhd$
        \textit{
        $L(x)$ will be the set of lengths $\ell\in [d(t)-d(x), d(t)-d(x)+k]$ such that there is a path of length $\ell$ from $x$ to $t$ in $\G{x}$.
        }
        \end{quote}
        \hfill 

    \State For each vertex $x$ with $d(x)\in [d(t)-(1-\alpha)k/2, d(t)]$, set $L(x)$ to be the set of all positive integers $\ell\le (3-\alpha)k/2$ such that there is a path of length $\ell$ from $x$ to $t$ in $\G{x}$.
    \label{step:base}

    \Statex
        \begin{quote}$\rhd$
        \textit{
        Base case: we compute $L(x)$ for the vertices $x$ which are furthest from $s$. 
        }
        \end{quote}
        \hfill

    \State For each $d$ from $d(t)-(1-\alpha)k/2-1$ down to $0$, for each vertex $x$ with $d(x) = d$, complete steps \ref{step:mlow} through \ref{step:mhigh-add}.
    \label{step:induct}

    \Statex
        \begin{quote}$\rhd$
        \textit{
        Inductive Case: compute $L(x)$ layer by layer towards $s$.
        }
        \end{quote}
        \hfill

    \Indent
        \State 
        \parbox[t]{367pt}{
        For each integer $m$ with $0\le m < \alpha k$, and for each choice of integers $k_1, \ell_2\ge 0$ satisfying $k_1 + \ell_2\le (3k+m+2)/4$, complete steps \ref{step:mlow-1} and \ref{step:mlow-2}.} \label{step:mlow}

            \Statex
            \begin{quote}
            \begin{quote}\; 
            $\rhd$
        \textit{
        This step handles \textbf{Case 1:} the solution path has few stable edges. 
        }
        \end{quote}
        \end{quote}
        \hfill

            \Indent

                \State  \parbox[t]{352pt}{If there is a path of length $\ell\le 2 k_1 + \ell_2$ from $x$ to $t$ in $\G{x}$, update $L(x)\leftarrow L(x)\cup\set{\ell}$.\strut}
                \label{step:mlow-1}
                
                \Statex
                \begin{quote}
                \begin{quote}
                \begin{quote}\;\;
                $\rhd$
                \textit{
                This step handles \textbf{Case 1(a):} $d(t) - d(x)\le (k-m)/2$.
                }
                \end{quote}
                \end{quote}
                \end{quote}
                \hfill
    
                \State \parbox[t]{352pt}{Try out all vertices $y$ with $d(y)\in [d(x)+1,\min(d(t), d(x) + (3k-m)/2+1)]$. If for some such $y$, there is a path from $x$ to $y$ in $\Gint{x}{y}$ of length $a\le 2 k_1 + \ell_2$ with exactly $k_1$ vertices in $V_1$, and $\ell_2$ edges with both endpoints in $V_2$, update $L(x)\leftarrow L(x)\cup\grp{a + L(y)}$.\strut}
                \label{step:mlow-2}

                \Statex
                \begin{quote}
                \begin{quote}
                \begin{quote}\;\;
                $\rhd$
                \textit{
                This step handles \textbf{Case 1(b):} $d(t) - d(x) > (k-m)/2$.
                }
                \end{quote}
                \end{quote}
                \end{quote}
                \hfill
            \EndIndent

        \State For each integer $m$ with $\alpha k < m \le k$, complete step \ref{step:mhigh-add}.\label{step:mhigh}

                \Statex
        \begin{quote}
        \begin{quote}\;
        $\rhd$
        \textit{
        This step handles \textbf{Case 2:} the solution path has many stable edges. 
        }
        \end{quote}
        \end{quote}
        \hfill

        \Indent

            \State \parbox[t]{352pt}{Try out all vertices $y$ with $d(y)\in [d(x)+1, d(x)+(1-\alpha)k/2+1]$.
            If for some such $y$, there is a path from $x$ to $y$ in $\Gint{x}{y}$ of length $a\le (3-\alpha)k/2 + 1$, update $L(x)\leftarrow L(x)\cup\grp{a + L(y)}$\strut} \label{step:mhigh-add}
    
        \EndIndent
    \EndIndent

    \State Return \textbf{yes} if and only if $(\dist(s,t) + k)\in L(s)$. \label{step:answer}

    \end{algorithmic}

\end{algorithm}

Our algorithm computes a set $L(x)$ for each vertex $x$ in the graph, corresponding to the possible lengths of potential subpaths from $x$ to $t$ of a solution path from $s$ to $t$.

In step \ref{step:base} of \Cref{alg:detour}, we compute $L(x)$ for all $x$ that are ``far'' from $s$, by solving instances of \textsf{$\ell$-Path} for $\ell\le (3-\alpha)k/2$.
Starting in step \ref{step:induct}, we compute $L(x)$ for vertices $x$ closer to $s$, in terms of the previously computed sets $L(y)$ for vertices $y$ further from $s$. 
In steps \ref{step:mlow} through \ref{step:mlow-2}, we compute some lengths in $L(x)$ by solving instances of \textsf{$(a, k_1, \ell_2)$-Bipartitioned Path} for appropriate $a, k_1, \ell_2$ values, and in \ref{step:mhigh} and \ref{step:mhigh-add} we compute the remaining lengths in $L(x)$ by solving \textsf{$a$-Path} for $a\le (3-\alpha)k/2 + 1$.

\subsection{Correctness}\label{subsec:correctness}

In this section, we show that \Cref{alg:detour} correctly solves the \textsf{$k$-Detour} problem for any choice of $\alpha\in (0,1)$.
The main technical part of the proof lies in inductively showing that every possible solution path from $s$ to $t$ will be considered by the algorithm and its length will be included in the set $L(s)$. 
In \Cref{alg:detour}, we try out values of the variable $m$ from $0$ to $k$, and execute differently depending on how $m$ compares to $\alpha k$. 
This is interpreted as follows: suppose there is a solution path $P$ from $x$ to $t$, then $m$ corresponds to a  guess of the number of stable edges in $P$. 

In \textbf{Case 1}, we guess that $P$ has few stable edges $m < \alpha k$ which corresponds to steps~\ref{step:mlow} to \ref{step:mlow-2}. 
Under \textbf{Case 1}, there are two possible structures a potential solution path might take on depending on how $d(x)$ compares to $d(t)$. 
We refer to the case where $d(x) - d(t)$ is small as \textbf{Case 1(a)} considered by step~\ref{step:mlow-1}, and the case where $d(x) - d(t)$ is large as \textbf{Case 1(b)} considered by step~\ref{step:mlow-2}. In \textbf{Case 2}, we guess that $m \ge \alpha k$, so $P$ has many stable edges,  which corresponds to steps~\ref{step:mhigh} to \ref{step:mhigh-add}. These cases are also formally defined in our proof of correctness.

\begin{theorem}
    \label{thm:detour-correct}
    For any fixed $\alpha \in (0,1)$, \Cref{alg:detour} correctly solves the \textsf{$k$-Detour} problem.
\end{theorem}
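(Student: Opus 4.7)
The two things to show are \emph{soundness} (every length in $L(x)$ corresponds to a real $xt$-path in $\G{x}$) and \emph{completeness} (every length of such a path lies in $L(x)$). Soundness is essentially bookkeeping: lengths added by direct path searches in step~\ref{step:base} or step~\ref{step:mlow-1} are correct by definition, and any other length added to $L(x)$ has the form $a + b$, where $a$ is the length of an explicit path $A$ in $\Gint{x}{y}$ and $b \in L(y)$ corresponds (by the inductive hypothesis) to a path $B$ in $\G{y}$. Since $\Gint{x}{y}$ and $\G{y}$ share only the vertex $y$, concatenating $A$ with $B$ automatically yields a simple $xt$-path of length $a+b$ in $\G{x}$.

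Completeness I would prove by backward induction on $d(x)$, in the order the algorithm processes vertices. The base case $d(x) \ge d(t) - (1-\alpha)k/2$ is immediate because every candidate length $\ell$ is then at most $d(t) - d(x) + k \le (3-\alpha)k/2$ and so is captured by step~\ref{step:base}. For the inductive step, fix any path $P$ from $x$ to $t$ in $\G{x}$ of length $\ell \in [d(t)-d(x), d(t)-d(x)+k]$, and let $f$, $b$, $m$ be its numbers of forward, backward, and stable edges. The identities $f - b = d(t) - d(x)$ and $f + b + m = \ell$ give the key bound $b \le (k - m)/2$ on the downward moves of $P$, which drives the rest of the argument.

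The crux is a \emph{path-splitting lemma}: there exists $y$ on $P$ with $d(x) < d(y) \le d(x) + b + 1$ such that $P[x,y] \subseteq \Gint{x}{y}$ and $P[y,t] \subseteq \G{y}$. I expect this to be the main obstacle, since naive candidates (for instance the last vertex of $P$ at the minimum $d$) can violate the required containment for $P[x,y]$. My plan is to construct $y$ by scanning $P$ from $t$ toward $x$ and taking $y$ to be the first vertex all of whose successors on $P$ lie at strictly greater $d$, and then to bound $d(y) - d(x)$ from above using that each downward move of $P$ contributes at most one extra layer that the splitting vertex must clear. Once $y$ is fixed, the sublength $|P[y,t]|$ falls in $[d(t)-d(y), d(t)-d(y)+k]$, so the inductive hypothesis applies to $L(y)$.

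With the splitting lemma in hand, matching $P$ to the correct algorithmic branch is mechanical. If $m \ge \alpha k$ (Case 2), then $b \le (1-\alpha)k/2$, so $y$ lies in the scan range $[d(x)+1, d(x)+(1-\alpha)k/2+1]$ of step~\ref{step:mhigh-add}, and $|P[x,y]| \le d(y)-d(x)+k \le (3-\alpha)k/2+1$ is recovered by the \textsf{$a$-Path} search in that step; combining with $L(y)$ inserts $\ell$ into $L(x)$. If $m < \alpha k$ (Case 1), either $d(t)-d(x) \le (k-m)/2$, in which case $\ell \le (3k-m)/2$ is captured by step~\ref{step:mlow-1} via \Cref{lem:near-bipartite} applied with $y = t$; or we use the splitting lemma and step~\ref{step:mlow-2} together with the parity bipartition from step~\ref{step:partition}. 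For this last subcase I must verify, for the actual $(k_1, \ell_2)$ induced by $P[x,y]$, both the feasibility condition $|P[x,y]| + 1 \ge k_1 + 2\ell_2$ of \Cref{lem:near-bipartite} and the iteration bound $k_1 + \ell_2 \le (3k+m+2)/4$ of step~\ref{step:mlow}. The parity structure of $V_1 \sqcup V_2$ immediately gives $\ell_2 \le m$, since only stable edges can have both endpoints in $V_2$, and a direct count over the alternating parity blocks of $P[x,y]$, combined with the length bound $|P[x,y]| \le (3k-m)/2+1$ from the splitting lemma, reduces the iteration inequality to routine arithmetic.
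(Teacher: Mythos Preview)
Your proposal follows the same architecture as the paper's proof: soundness by concatenation, completeness by backward induction on $d(x)$, a path-splitting lemma bounding $d(y) - d(x) \le (k-m)/2 + 1$, and case analysis on $m$ versus $\alpha k$. Two points need correction.

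First, your construction of $y$ in the splitting lemma does not work as stated. Taking ``the first vertex (scanning from $t$) all of whose successors on $P$ lie at strictly greater $d$'' yields a right-to-left record minimum, which only guarantees $P[y,t] \subseteq \G{y}$; it does \emph{not} force $P[x,y] \subseteq \Gint{x}{y}$. For instance, if the $d$-sequence along $P$ begins $0,1,2,3,2,1,2,3,\dots$, your rule selects the second vertex at level $1$, and then $P[x,y]$ contains a vertex at level $3 > d(y)$. The paper instead lets $z_i$ be the \emph{last} vertex of $P$ at level $d(x)+i$ for each $i \in [1,(k-m)/2+1]$, so that the $\G{y}$-containment is automatic, and then argues by pigeonhole: if every $z_i$ violated the $\Gint{x}{y}$-containment, each would witness a distinct backward edge of $P$ landing at level $d(x)+i$, forcing $b \ge (k-m)/2 + 1$ and contradicting $b \le (k-m)/2$. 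This is exactly the ``each downward move contributes one extra layer'' intuition you mention, but the choice of candidate $y$ matters.

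Second, a minor slip: the inequality you call the ``feasibility condition $|P[x,y]|+1 \ge k_1 + 2\ell_2$ of \Cref{lem:near-bipartite}'' is irrelevant for \emph{correctness}; it only governs the runtime of the Bipartitioned Path subroutine. What completeness actually requires is the algorithm's search constraint $a \le 2k_1 + \ell_2$ in step~\ref{step:mlow-2}. This holds simply because every edge of $A = P[x,y]$ is either incident to a $V_1$-vertex (at most $2k_1$ such edges) or has both endpoints in $V_2$ (exactly $\ell_2$ such edges).

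Your ``alternating parity blocks'' count and the paper's double-counting trick (close $Q$ into a cycle and compare incidences on each side) both deliver $k_1 + \ell_2 \le (a + m + O(1))/2 \le (3k+m+O(1))/4$; either route is fine here.
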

\begin{proof}

We prove that upon halting, each set $L(x)$ computed by \Cref{alg:detour} has the property that for all integers $\ell\in [d(t)-d(x), d(t)-d(x)+k]$, we have 
    \begin{equation}
    \label{key-property}
\ell\in L(x) \text{ if and only if there is a path of length }\ell\text{ from }x\text{ to }t\text{ in }\G{x}.
    \end{equation}
If this property holds, then step \ref{step:answer} of \Cref{alg:detour} returns the correct answer to the \textsf{$k$-Detour} problem, since $\dist(s,t) + k$ is in $L(s)$ if and only if there is a path from $s$ to $t$ of length $\dist(s,t) + k$ in $\G{s} = G$.

So, it suffices to show that \cref{key-property} holds for all vertices $x$.
We prove this result by induction on the distance of $x$ from $s$ in the graph.

\noindent \textbf{Base case:} For the base case, suppose $x$ is a vertex with
    \begin{equation}
    \label{eq:x-basecase}
    d(x)\in [d(t)-(1-\alpha)k/2, d(t)].
    \end{equation}
Then $L(x)$ is computed in step \ref{step:base} of \Cref{alg:detour}.
We now verify that \cref{key-property} holds.

First, suppose $\ell\in L(x)$.

Then, $\ell$ must be the length of some path from $x$ to $t$ in $\G{x}$ by design.

Conversely, suppose we have a path $P$ from $x$ to $t$ in $\G{x}$ of some length \[\ell\le d(t)-d(x) + k.\]
Then by the assumption on $x$ from \cref{eq:x-basecase} in this case, we have
    \[\ell \le d(t) - d(x) + k \le (1-\alpha)k/2 + k = (3-\alpha)k/2\]
so step \ref{step:base} of \Cref{alg:detour} correctly includes $\ell$ in $L(x)$.

Thus \Cref{key-property} holds for all vertices $x$ satisfying \cref{eq:x-basecase}.

\noindent \textbf{Inductive case:} For the inductive step, suppose $x$ is a vertex with
\begin{equation}
\label{eq:x-induct}
d(x) \le d(t) - (1-\alpha)k/2 - 1.
\end{equation}
We may inductively assume that we have computed sets $L(y)$ satisfying \cref{key-property}, for all vertices $y$ with $d(y) > d(x)$.

Suppose $\ell\in L(x)$ at the end of \Cref{alg:detour}.
Then either $\ell$ was added to $L(x)$ in step \ref{step:mlow-1}, or $\ell$ was added to $L(x)$ in steps \ref{step:mlow-2} or \ref{step:mhigh-add} of \Cref{alg:detour}.
In the former case, $\ell$ is the length of a path from $x$ to $t$ in $\G{x}$ by design.
In the latter cases, we have $\ell = a + b$, where $a$ is the length of some path from $x$ to $y$ (for some vertex $y$ with $d(y) > d(x)$) in $\Gint{x}{y}$, and (by the inductive hypothesis) $b$ is the length of some path from $y$ to $t$ in $\G{y}$.
Since $\Gint{x}{y}$ and $\G{y}$ intersect only at $y$, the union of these paths is a path from $x$ to $t$ in $\G{t}$.
So, every integer in $L(x)$ is a valid length of a path from $x$ to $t$ in $\G{x}$ as desired.

Conversely, suppose there is a path $P$ from $x$ to $t$ in $\G{x}$ of length 
    \begin{equation}
    \label{eq:path-length}
    \ell \in [d(t)-d(x), d(t)-d(x)+k].
    \end{equation}
We prove that $\ell$ appears in $L(x)$.

To do this, we will analyze the number of forward, backward, and stable edges appearing in $P$.
Note that $P$ has at least $d(t)-d(x)$ forward edges, since $P$ begins at a vertex at distance $d(x)$ from $s$, ends at a vertex at distance $d(t)$ from $s$, and only the forward edges allow us to move to vertices further from $s$.

Let $m$ denote the number of stable edges in $P$.
We have $m\le k$, since the length of $P$ is at most $d(t)-d(x) + k$, and $P$ has at least $d(t)-d(x)$ forward edges.

\begin{claim}
    \label{claim:split}
    Suppose $d(x) \le d(t) - (k-m)/2 - 1$.
    Then $P$ contains a vertex $y$ such that

    \begin{enumerate}
        \item
            $d(y)\in [d(x)+1,d(x) + (k-m)/2+1]$,
        \item 
            every vertex $u\in P[y,t]$ with $u\neq y$ has $d(u) > d(y)$, and
        \item
            every vertex $v\in P[x,y]$ has $d(v)\le d(y)$.
    \end{enumerate}
\end{claim}

\begin{claimproof}
    For each $i\in [(k-m)/2+1]$, let $z_i$ denote the last vertex on $P$ satisfying
        \[d(z_i) = d(x) + i.\]
    These vertices exist because we are assuming that $d(x)\le d(t) - (k-m)/2-1$, and $P$ must contain vertices $v$ with $d(v) = d$ for every $d\in [d(x), d(t)]$.

    By definition, each $z_i$ satisfies conditions \textbf{1} and \textbf{2} from the claim.
    If some $z_i$ satisfies condition \textbf{3} as well, then the claim is true.

    So, suppose that none of the $z_i$ satisfy condition \textbf{3}.
    This means that for each index $i$, the subpath $P[x,z_i]$ contains a vertex $u$ with $d(u) > d(z_i)$.
    Consecutive vertices in $P$ differ in their distance from $s$ by at most one, so $P[x,z_i]$ must contain an edge $e = (v,w)$ such that $d(v) = d(w)+1$ and $d(w) = d(z_i) = d(x) + i$.
    That is, $P$ contains a backwards edge from a vertex at distance $i+1$ from $s$ to a vertex at distance $i$ from $s$, as depicted in \Cref{fig:back-edges}.

    Note that $z_1, z_2, \dots, z_{(k-m)/2+1}$ occur on $P$ in the listed order.
    This is because
        \[d(z_1) < d(z_2) < \dots < d(z_{(k-m)/2+1})\]
    and each $z_i$ satisfies condition \textbf{1} from the claim.
    Combined with the discussion from the previous paragraph, this means that $P$ contains at least $(k-m)/2+1$ backwards edges. 
    We now argue that this violates the assumption on the length of $P$.

    Let $f$ and $b$ denote the number of forward and backwards edges in $P$ respectively.
    
    Since $P$ starts at $x$ and ends at $t$, we have 
    $f - b = d(t) - d(x)$, which implies that 
        \begin{equation}
            \label{eq:netchange}
            f = d(t) - d(x) + b.
        \end{equation}
    Then the total length of $P$ is 
        $f + b + m = d(t) - d(x) + m + 2b$
    by \cref{eq:netchange}.
    However, since $P$ has at least $(k-m)/2+1$ backwards edges, this length satisfies 
    \[d(t) - d(x) + m + 2b\ge d(t) - d(x) + m + 2\grp{(k-m)/2 + 1} > d(t) - d(x) + k\]
    which contradicts the fact that the length $\ell$ of $P$ satisfies \cref{eq:path-length}.
    Thus our assumption was incorrect, and one of the $z_i$ satisfies all three conditions from the claim, as desired.
\end{claimproof}

    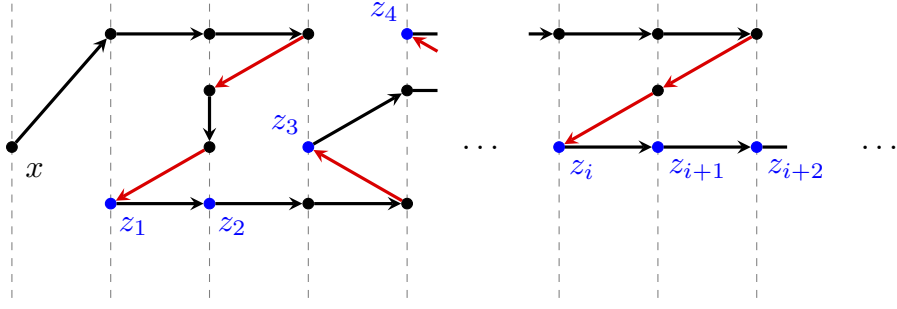
\begin{figure}
        \centering
            \begin{tikzpicture}[
    vtx/.style={circle,draw,fill=black,inner sep=0pt,minimum width=4pt},
    edge/.style={-stealth, very thick}, b/.style={-stealth, very thick, color=red!85!black}
    ]

    \def\top{4.5cm};
    \def\bottom{0.5cm};
    \def\half{{(0.5*(\top+\bottom))}};
    
    \def\start{0.25cm};
    \def\layershift{1.3cm};
    \def\smallshift{0.7cm};

    \foreach \x in {0,...,4} {
        \draw[gray,dashed] (\start + \layershift*\x,\bottom) -- (\start + \layershift*\x,\top);
    }

    \foreach \x in {5,...,7} {
        \draw[gray,dashed] (\start + \smallshift + \layershift*\x,\bottom) -- (\start + \smallshift + \layershift*\x,\top);
    }

    \node[scale=1.2] at (\start + 4.5*\layershift + 0.5*\smallshift, 2.5) {$\dots$};
    \node[scale=1.2] at (\start + 8*\layershift + 1.5*\smallshift, 2.5) {$\dots$};
    \node[scale=1.2] at (\start - \smallshift, 2.5) {$\phantom{\dots}$};

    \node[vtx] (1) at (\start,\half) {};
    \node[scale=1.2] at (\start + 0.3cm, 2.2cm) {$x$};

    \node[vtx] (2) at (\start + \layershift, 4) {};
    \node[vtx,color=blue] (7) at (\start + \layershift, 1.75) {};
    \node[scale=1.2,color=blue] at (\start + \layershift + 0.3cm, 1.45cm) {$z_1$};

    \node[vtx] (3) at (\start + 2*\layershift, 4) {};
    \node[vtx] (5) at (\start + 2*\layershift, 3.25) {};
    \node[vtx] (6) at (\start + 2*\layershift, 2.5) {};
    \node[vtx, color=blue] (8) at (\start + 2*\layershift, 1.75) {};
    \node[scale=1.2,color=blue] at (\start + 2*\layershift + 0.3cm, 1.45cm) {$z_2$};

    \node[vtx] (4) at (\start + 3*\layershift, 4) {}; \node[vtx] (9) at (\start + 3*\layershift, 1.75) {};
    \node[vtx,color=blue] (11) at (\start + 3*\layershift, 2.5) {};
    \node[scale=1.2,color=blue] at (\start + 3*\layershift - 0.3cm, 2.8cm) {$z_3$};

    \node[vtx] (10) at (\start + 4*\layershift, 1.75) {};
    \node[vtx] (12) at (\start + 4*\layershift, 3.25) {};
    \node[vtx,color=blue] (13) at (\start + 4*\layershift, 4) {};
    \node[scale=1.2,color=blue] at (\start + 4*\layershift - 0.3cm, 4.3cm) {$z_4$};

    \node[vtx] (14) at (\start + \smallshift + 5*\layershift, 4) {};
    \node[vtx,color=blue] (18) at (\start + \smallshift + 5*\layershift, 2.5) {};
    \node[scale=1.2,color=blue] at (\start + \smallshift + 5*\layershift + 0.3cm, 2.2) {$z_{i}$};

    \node[vtx] (15) at (\start + \smallshift+ 6*\layershift, 4) {};
    \node[vtx] (17) at (\start + \smallshift + 6*\layershift, 3.25) {};
    \node[vtx,color=blue] (19) at (\start + \smallshift + 6*\layershift, 2.5) {};
    \node[scale=1.2,color=blue] at (\start + \smallshift + 6*\layershift + 0.52cm, 2.2) {$z_{i+1}$};

        \node[vtx] (16) at (\start + \smallshift + 7*\layershift, 4) {};
        \node[vtx,color=blue] (20) at (\start + \smallshift + 7*\layershift, 2.5) {};
    \node[scale=1.2,color=blue] at (\start + \smallshift + 7*\layershift + 0.52cm, 2.2) {$z_{i+2}$};

    \draw[edge]   (1) -- (2);
    \draw[edge]   (2) -- (3);
    \draw[edge]   (3) -- (4);
    \draw[b]      (4) -- (5);
    \draw[edge]   (5) -- (6);
    \draw[b]      (6) -- (7);
    \draw[edge]   (7) -- (8);
    \draw[edge]   (8) -- (9);
    \draw[edge]  (9) -- (10);
    \draw[b]    (10) -- (11);
    \draw[edge] (11) -- (12);
    \draw[edge] (14) -- (15);
    \draw[edge] (15) -- (16);
    \draw[b]    (16) -- (17);
    \draw[b]    (17) -- (18);
    \draw[edge] (18) -- (19);
    \draw[edge] (19) -- (20);

    \draw[very thick] (12) -- (\start + 4*\layershift + 0.4cm, 3.25);
    \draw[b] (\start + 4*\layershift + 0.4cm, 4cm - 0.2307cm) -- (13);
    \draw[very thick] (13) -- (\start + 4*\layershift + 0.4cm, 4cm);
    \draw[-stealth, very thick] (\start + \smallshift + 5*\layershift - 0.4cm, 4cm) -- (14);
    \draw[very thick] (20) -- (\start + \smallshift + 7*\layershift + 0.4cm, 2.5cm);
    
    \end{tikzpicture}
        \caption{If node $z_i$ does not satisfy condition \textbf{3} of \Cref{claim:split}, it means that before hitting $z_{i}$, the path visited a node further from $s$ than $z_{i}$. Thus, we can associate $z_i$ with some backwards edge.
        The presence of too many of these backwards edges would violate the length condition on $P$, so it turns out that one such node (in the figure, $z_{i+2}$) does have to satisfy condition \textbf{3}.}
        \label{fig:back-edges}
    \end{figure}

We now perform casework on the number of stable edges $m$ in $P$. We start with \textbf{Case 2} from step \ref{step:mhigh} of \Cref{alg:detour}, since this is the easiest case to analyze.

\paragraph*{Case 2: Many Stable Edges ($\boldsymbol{m \ge \alpha k}$)}
Suppose $m \ge \alpha k$.
In this case, by \cref{eq:x-induct} we have 
    \[d(x) \le d(t) - (1-\alpha)k/2 - 1 \le d(t) - (k-m)/2 - 1\]
So, by \Cref{claim:split}, there exists a vertex $y$ in $P$ satisfying the three conditions of \Cref{claim:split}.

By condition \textbf{3} from \Cref{claim:split}, the subpath
    $A = P[x,y]$
is contained in $\Gint{x}{y}$.
By condition \textbf{2} from \Cref{claim:split}, the subpath
    $B = P[y,t]$
is contained in $\G{y}$.

Let $a$ denote the length of $A$, and $b$ denote the length of $B$.

Since $A$ has length at least $d(y) - d(x)$, and $P$ has length at most $d(t)-d(x) + k$ by \cref{eq:path-length}, we know that the length $B$ satisfies
    \begin{equation}
    \label{case1-bbound}
    b \le d(t) - d(y) + k.
    \end{equation}
By the inductive hypothesis, $L(y)$ satisfies \cref{key-property}, so $b\in L(y)$.

Similar to the reasoning that established \cref{case1-bbound}, we can prove that
    \begin{equation}
    \label{case1-abound}
    a \le d(y) - d(x) + k.
    \end{equation}
By condition \textbf{1} of \Cref{claim:split}, we
know that $d(y)\le d(x) + (k-m)/2 + 1$.
Since $m \ge \alpha k$, this implies that 
    $d(y)\le d(x) + (1-\alpha)k/2 + 1.$
    
Substituting this into \cref{case1-abound} yields
    \[a\le (1-\alpha)k/2 + 1 + k = (3-\alpha)k/2 + 1.\]
Thus, the length $a$ of $A$ will be found in step \ref{step:mhigh-add} of \Cref{alg:detour}.
As mentioned before, $b\in L(y)$.
Thus, $\ell = a+b\in (a + L(y))$ is correctly added to the set $L(x)$ in step \ref{step:mhigh-add} of \Cref{alg:detour}, which proves the desired result in this case.

\paragraph*{Case 1: Few Stable Edges ($\boldsymbol{m < \alpha k}$)}

If we do not fall into {\bf Case 2}, we must have $m < \alpha k$.
Recall that in step \ref{step:partition} of \Cref{alg:detour}, we defined 
    $V_1 = \set{u\mid d(u)\text{ is odd}}$
and 
    $V_2 = \set{u\mid d(u)\text{ is even}}.$

We want to argue that most edges in path $P$ cross the bipartition $V_1\sqcup V_2$.
To that end, the following claim will be helpful.

\begin{restatable}{claim}{labelbounding}
    \label{claim:label-bounding}
    Let $Q$ be a path of length $q$, with at most $m$ stable edges.
    Let $k_1$ denote the number of vertices of $Q$ in $V_1$, and let $\ell_2$ denote the number of edges in $Q$ with both endpoints in $V_2$.
    Then we have
        \[k_1 + \ell_2 \le (q + m + 1)/2.\]
\end{restatable}
\begin{claimproof}
Let $k_2$ denote the number of vertices of $Q$ in $V_2$.

Consider the cycle $C$ formed by taking $Q$ together with an additional edge between its endpoints (this new edge is imagined for the purpose of argument, and does not change the definition of $V_1$ and $V_2$).

Let $q_1$, $q_2$, and $q_{\t{cross}}$ denote the number of edges of $C$ with both endpoints in $V_1$, both endpoints in $V_2$, and endpoints in both $V_1$ and $V_2$ respectively. 
We have 
    \begin{equation}
    \label{eq:double-count1}
    2k_1 = 2q_1 + q_{\t{cross}}
    \end{equation}
because both sides of the above equation count the number of pairs $(u, e)$ such that $u$ is a vertex in $C\cap V_1$, and $e$ is an edge in $C$ incident to $u$. 
A symmetric argument implies that
    \begin{equation}
    \label{eq:double-count2}
    2q_2 + q_{\t{cross}} = 2k_2.
    \end{equation}
Adding \Cref{eq:double-count1} and \Cref{eq:double-count2} together and simplifying yields
    \[k_1 + q_2 = k_2 + q_1.\]
This implies that 
\[
    k_1 + q_2 = \grp{k_1 + k_2 + q_1 + q_2}/2.
\]
Since $C$ is $Q$ with one additional edge, we have $\ell_2 \le q_2$.
So the above equation implies that 
    \begin{equation}
    \label{eq:prelim-bound}
    k_1 + \ell_2 \le \grp{k_1 + k_2 + q_1 + q_2}/2.
    \end{equation}
We have 
\begin{equation}
\label{eq:vertices-length}
k_1 + k_2 = q+1
\end{equation}
since the total number of vertices in $Q$ must be one more than its length. 
By assumption on the number of stable edges in $Q$, we have 
\begin{equation}
\label{eq:stable-bound}
q_1 + q_2 \le m.
\end{equation}
Substituting \cref{eq:vertices-length} and \cref{eq:stable-bound} into the right hand side of \cref{eq:prelim-bound} yields
    \[k_1 + \ell_2 \le (q + m  + 1)/2\]
which proves the desired result. 
\end{claimproof}
\noindent With \Cref{claim:label-bounding} established, we are now ready to analyze the two subcases under \textbf{Case 1}, based on the relative distances of $x$ and $t$ from $s$. 

\paragraph*{Case 1(a): $\boldsymbol{d(t) - d(x)}$ is small}
Suppose that $d(x) \in [d(t)-(k-m)/2, d(t)].$

In this case, \cref{eq:path-length} implies that $P$ has length 
    \[\ell \le d(t) - d(x) + k \le (3k-m)/2.\]
Let $k_1$ denote the number of vertices of $P$ in $V_1$, and $k_2$ denote the number of edges in $P$ with both endpoints in $V_2$.
Then by setting $Q = P$ and $q = \ell$ in \Cref{claim:label-bounding}, we have 
    \begin{equation}
    \label{2a-label-bound}
    k_1 + \ell_2 \le (\ell + m + 1)/2 \le (3k+m+2)/4.
    \end{equation}
Also, note that $P$ has length $\ell \le 2 k_1 + \ell_2$, since $2k_1$ is greater than or equal to the number of edges in $P$ incident to a vertex in $V_1$.
This observation, together with \cref{2a-label-bound}, shows that in this case, the length $\ell$ is correctly included in $L(x)$ in step \ref{step:mlow-1} of \Cref{alg:detour}.

\paragraph*{Case 1(b): $\boldsymbol{d(t) - d(x)}$ is large}

If we do not fall into \textbf{Case 1(a)}, it means that 
\begin{equation}
\label{2a-xinduct}
d(x)\le d(t) - (k-m)/2 - 1.
\end{equation}
Thus, by \Cref{claim:split}, there exists a vertex $y$ in $P$ satisfying the three conditions of \Cref{claim:split}.
The proof that $\ell\in L(x)$ in this case is essentially a combination of the  proofs from \textbf{Case 2} and \textbf{Case 1(a)}.

As in \textbf{Case 2}, by condition \textbf{3} from \Cref{claim:split}, the subpath
    $A = P[x,y]$
is contained in $\Gint{x}{y}$.
By condition \textbf{2} from \Cref{claim:split}, the subpath
    $B = P[y,t]$
is contained in $\G{y}$.

Let $a$ and $b$ denote the lengths of paths $A$ and $B$ respectively.
Reasoning identical to the arguments which established \cref{case1-bbound,case1-abound} prove that in this case we also have 
    \begin{equation}
    \label{case2b-bbound}
    b\le d(t) - d(y) + k
    \end{equation}
    and
    \begin{equation}
    \label{case2b-abound}
    a\le d(y) - d(x) + k.
    \end{equation}
Condition \textbf{1} of \Cref{claim:split} implies that $d(y)\le d(x) + (k-m)/2 + 1$.
Substituting this into \cref{case2b-abound} implies that
    \[a\le (3k-m)/2 + 1.\]
Let $k_1$ denote the number of vertices of $A$ in $V_1$, and let $\ell_2$ denote the number of edges in $A$ with both endpoints in $V_2$.
Then by setting $Q = A$ and $q = a$ in \Cref{claim:label-bounding}, we have 
\begin{equation}
\label{2b-label-bound}
    k_1 + \ell_2 \le (a + m + 1)/2 \le (3k + m + 2)/4.
\end{equation}
Also, we know that $a\le 2k_1 + \ell_2$, because $2k_1$ is greater than or equal to the number of edges in $A$ incident to a vertex in $V_1$.
Combining this observation with \cref{2b-label-bound}, we see that the length $a$ is indeed computed in step \ref{step:mlow-2} of \Cref{alg:detour}.

By the inductive hypothesis (\cref{key-property}) and \cref{case2b-bbound}, we know that $b\in L(y)$.
Thus we have $\ell = a+b \in (a+L(y))$, so in this case, $\ell$ is correctly included in $L(x)$ in step \ref{step:mlow-2} of \Cref{alg:detour}.

This completes the induction, and proves that \cref{key-property} holds for all vertices $x$ in the graph.
In particular, \cref{key-property} holds for $x$ equal to $s$.
This implies that step \ref{step:answer} of \Cref{alg:detour} returns the correct answer to the \textsf{$k$-Detour} algorithm.
\end{proof}

\section{Applications}\label{subsec:app}

In this section, we present consequences of our new algorithm for \textsf{$k$-Detour} from \Cref{sec:detour}.



\detourthm*
\begin{proof}
By \Cref{thm:detour-correct}, \Cref{alg:detour} correctly solves \textsf{$k$-Detour}, for any value $\alpha\in (0,1)$, 

What is the runtime of \Cref{alg:detour}?
Well, steps \ref{step:base} and \ref{step:mhigh-add} of \Cref{alg:detour} involve solving polynomially many instances of \textsf{$\ell$-Path}, for $\ell\le (3-\alpha)k/2 + 1$.
Using the fastest known algorithm for \textsf{$k$-Path} in undirected graphs \cite{narrow-sieves}, these steps take
    \[1.657^{(3-\alpha)k/2}\poly(n)\]
time.
The remaining computationally intensive steps of \Cref{alg:detour} occur in steps \ref{step:mlow-1} and \ref{step:mlow-2}, which can be implemented by solving $\poly(n)$ instances of \textsf{$(\ell, k_1, \ell_2)$-Bipartitioned Path}, for $k_1 + \ell_2 < (3k + \alpha k + 2)/4$.
By \Cref{lem:near-bipartite}, these steps then take
    \[2^{(3+\alpha)k/4}\poly(n)\]
time overall.
Then by setting $\alpha = 0.55814$ to balance the above runtimes, we see that we can solve \textsf{$k$-Detour} over undirected graphs in 
    \[\grp{1.657^{(3-\alpha)k/2} + 2^{(3+\alpha)k/4}}\poly(n) \le 1.8526^k\poly(n)\]
time, as desired. \qedhere
\end{proof}

\longestdetour*
\begin{proof}
The proof of \cite[Corollary 2]{directed-detours} shows that \textsf{$k$-Longest Detour} in undirected graphs reduces, in polynomial time, to solving \textsf{$p$-Detour} for all $p\le 2k$ and $\poly(n)$ instances of \textsf{$(3k/2)$-Longest Path} on graphs with at most $n$ nodes.

The proof of \Cref{thm:k-detour} implies that \textsf{$k$-Detour} can be solved over undirected graphs in $1.8526^k\poly(n)$ time.
Previous work in \cite[Section 6.3]{det-sieving} shows that \textsf{$k$-Longest Path} can be solved over undirected graphs in $1.657^k\poly(n)$ time.
 Combining these results together with the above discussion shows that \textsf{$k$-Longest Detour} can be solved over undirected graphs in
    \[\grp{1.8526^{2k} + 1.657^{3k/2}}\poly(n) \le 3.432^k\poly(n) \]
time, as desired.
\end{proof}

\detdetour*
\begin{proof}
By \Cref{thm:detour-correct}, we can solve \textsf{$k$-Detour} over an undirected graph by running \Cref{alg:detour} with parameter $\alpha = 0$.
When $\alpha = 0$ in \Cref{alg:detour}, steps \ref{step:mlow}, \ref{step:mlow-1}, \ref{step:mlow-2} never occur. 
In this case, the algorithm only needs to solve $\poly(n)$ instances of \textsf{$\ell$-Path}, for $\ell \le 3k/2 + 1$, in steps \ref{step:base} and \ref{step:mhigh-add}.
Since \textsf{$k$-Path} can be solved deterministically in $2.554^k\poly(n)$ time \cite{fast-det-kpath}, this means that  we can solve \textsf{$k$-Detour} deterministically in 
    \[2.554^{3k/2}\poly(n) \le 4.0817^k\poly(n)\]
time, as desired.
\end{proof}

\detlongestdetour*
\begin{proof}
The proof of \cite[Corollary 2]{directed-detours} shows that \textsf{$k$-Longest Detour} in undirected graphs reduces, in deterministic polynomial time, to solving \textsf{$p$-Detour} for $p\le 2k$, and $\poly(n)$ instances of \textsf{$(3k/2)$-Longest Path} on graphs with at most $n$ nodes.

The proof of \Cref{det-detour} implies that \textsf{$k$-Detour} can be solved over undirected graphs deterministically in $4.0817^k\poly(n)$ time.
Previous work \cite{longest-path} shows that \textsf{$k$-Longest Path} can be solved deterministically in $4.884^k\poly(n)$ time.
Combining these results together with the above discussion shows that \textsf{$k$-Longest Detour} can be solved over undirected graphs deterministically in 
    \[\grp{4.0817^{2k} + 4.884^{3k/2}}\poly(n)\le 16.661^k\poly(n)\]
time, as desired.
\end{proof}

\section{Conclusion}
\label{sec:conclusion}

In this paper, we obtained faster algorithms for \textsf{$k$-Detour} and \textsf{$k$-Longest Detour} over undirected graphs.
However, many mysteries remain surrounding the true time complexity of these problems.
We highlight some open problems of interest, relevant to our work.

\begin{enumerate}
\item
    The most pertinent question: what is the true parameterized time complexity of \textsf{$k$-Detour} and \textsf{$k$-Longest Detour}? 
    In particular, could it be the case that \textsf{$k$-Detour} can be solved as quickly as \textsf{$k$-Path}, and \textsf{$k$-Longest Detour} can be solved as quickly as \textsf{$k$-Longest Path}?
    No known conditional lower bounds rule out these possibilities.

\item
    The current fastest algorithm for \textsf{$k$-Longest Path} in directed  graphs has a bottleneck of solving \textsf{$2k$-Path}.
    The current fastest algorithm for \textsf{$k$-Detour} in directed graphs has a bottleneck of solving \textsf{$2k$-Path}.
    Similarly, the fastest known algorithm\footnote{In fact, even the recent alternate algorithm of \cite{detour-2dp} for \textsf{$k$-Longest Detour} requires solving \textsf{$2k$-Detour} first.} for \textsf{$k$-Longest Detour} in undirected graphs requires first solving \textsf{$2k$-Detour}.
    Is this parameter blow-up necessary? 
    Could it be possible to solve these harder problems with parameter $k$ faster than solving these easier problems with parameter $2k$? 

\item
    The speed-up in our results crucially uses a fast algorithm for the \textsf{$(\ell,k_1, \ell_2)$-Bipartitioned Path} problem in undirected graphs.
    In directed graphs no $(2-\varepsilon)^{\ell}\poly(n)$ time algorithm appears to be known for this problem, for any constant $\varepsilon > 0$ and interesting range of parameters $k_1$ and $\ell_2$.
    Such improvements could yield faster algorithms for \textsf{$k$-Detour} in directed graphs.
    Can we get such an improvement? 
    Also of interest: can we get a faster deterministic algorithm for \textsf{$(\ell,k_1, \ell_2)$-Bipartitioned Path}?
\item
    An easier version of the previous question, also raised in \cite[Section 9.1]{det-sieving}: can we solve \textsf{$k$-Path} in directed bipartite graphs in $(2-\varepsilon)^k\poly(n)$ time, for some constant $\varepsilon > 0$? 
    In the unparameterized setting, the \textsf{Hamiltonian Path} (\textsf{$k$-Path} for $k=n$) problem admits several distinct algorithms running in $(2-\varepsilon)^n\poly(n)$ time in directed bipartite graphs.
    Specifically, \cite{base-matching-hamiltonian} shows \textsf{Hamiltonian Path} in directed bipartite graphs can be solved in $1.888^n\poly(n)$ time, and \cite{hamiltonian-matrix-tree-theorem} uses very different methods to solve this problem even faster in $3^{n/2}\poly(n)$ time.\footnote{It is also known that sufficient improvements to algorithms for multiplying two $n\times n$ matrices together would imply that even the weighted version of \textsf{Hamiltonian Path} in directed bipartite graphs can be solved in $(2-\varepsilon)^n\poly(n)$ time, for some constant $\varepsilon > 0$ \cite{nederlof-life}.}
    We conjecture that the same speed-up is possible for \textsf{$k$-Path}, so that this problem can be solved over directed bipartite graphs in $3^{k/2}\poly(n)$ time.

\end{enumerate}

\bibliography{main}

\appendix

\section{Additional Related Work}
\label{sec:related-work}


\paragraph*{Detours in Directed Graphs}
The \textsf{$k$-Longest Detour} problem is not known to be \textsf{FPT} over directed graphs.
However, this problem has been proven to be \textsf{FPT} for certain restricted classes of graphs.
For example, the algorithm of \cite{directed-detours} shows that \textsf{$k$-Longest Detour} is \textsf{FPT} on any class $\mathcal{G}$ of graphs where the \textsf{3-Disjoint Paths}\footnote{For any positive integer $p$, in the \textsf{$p$-Disjoint Paths} problem, we are given a graph along with source vertices $s_1, \dots, s_p$ and target vertices $t_1, \dots, t_p$, and tasked with finding disjoint paths from $s_i$ to $t_i$ for each index $i$. For constant $p$, this problem can be solved in polynomial time over undirected graphs. Already for $p=2$, this problem is \textsf{NP}-hard over general directed graphs.} problem can be solved in polynomial time.
This implies, for example, that \textsf{$k$-Longest Detour} is \textsf{FPT} over directed planar graphs
(see also \cite{planar-detour}, which presents a more direct argument showing \textsf{$k$-Longest Detour} is \textsf{FPT} in directed planar graphs).
More recently, \cite{detour-2dp} showed that \textsf{$k$-Longest Detour} is \textsf{FPT} on any  class $\mathcal{G}$ of graphs where the \textsf{2-Disjoint Paths} problem can be solved in polynomial time. 

\paragraph*{Deterministic Algorithms}
There has been significant research into obtaining fast deterministic algorithms for path and detour problems.
The fastest known deterministic algorithm for \textsf{$k$-Path} runs in $2.554^k\poly(n)$ time \cite{fast-det-kpath}, and the fastest known deterministic algorithm for \textsf{$k$-Longest Path} runs in $4.884^k\poly(n)$ time \cite{longest-path} (the runtimes reported in \Cref{sec:intro} for these problems come from randomized algorithms).
The work of \cite{detour-original} implies that the fastest known deterministic algorithm for \textsf{$k$-Detour} runs in $6.523^k\poly(n)$ time \cite{longest-path}.
Interestingly, for the \textsf{$k$-Path}, \textsf{$k$-Longest Path}, and \textsf{$k$-Detour} problems, no faster deterministic algorithms are known  for the special case of undirected graphs.
The work of \cite{directed-detours} implies that the fastest known deterministic algorithm for \textsf{$k$-Longest Detour} in undirected graphs runs in $42.549^k\poly(n)$ time.

\paragraph*{Above-Guarantee Parameterization}

The study of the \textsf{$k$-Detour} and \textsf{$k$-Longest Detour} problems belongs to a large subarea of parameterized algorithms which focuses on ``above-guarantee parameterizations.''
In these problems, we are given an input which is guaranteed to contain a structure of some size $\sigma$, and our task is to determine if the input contains a similar structure of size $k$ ``more than'' $\sigma$ (the definition of this increase in size depending on the problem of interest).
In the detour problems we discuss, we are guaranteed a path of length $\dist(s,t)$ from $s$ to $t$.
We refer the reader to \cite{above-guarantee-survey} for an accessible survey of results and open problems in this area.

\paragraph*{Conditional Lower Bounds}

In this section, we first define some problems, recall popular conjectures concerning the exact time complexity of those problems, and then state implications of those conjectures for the exact time complexity of variants of the \textsf{$k$-Path} problem.

In the \textsf{3SAT} problem, we are given a \textsf{3-CNF} $\varphi$ (i.e., a Boolean formula which can be written as a conjunction of clauses, where each clause is a disjunction of at most three variables or their negations) over $n$ variables, and tasked with determining if $\varphi$ is satisfiable (i.e, some assignment to the variables makes every clause in $\varphi$ evaluate to true).
The \textsf{Exponential Time Hypothesis} posits that there exists a constant $\delta > 0$ such that \textsf{3SAT} cannot be solved in $2^{\delta n}\poly(n)$ time.

In the \textsf{Set Cover} problem, we are given a set $U$ of $n$ elements, a family $\mathcal{F}$ of subsets of $U$, and a target integer $t$.
We are tasked with determining if there exists a collection of at most $t$ sets from $\mathcal{F}$ whose union equals $U$.
The \textsf{Set Cover Conjecture} asserts that for any constant $\delta > 0$, there exists a positive integer $\Delta$, such that \textsf{Set Cover} on instances where every set in the family $\mathcal{F}$ has size at most $\Delta$ cannot be solved in $2^{(1-\delta)n}\poly(n)$ time.

The \textsf{Exponential Time Hypothesis (ETH)} 
combined with the classical \textsf{NP}-hardness reduction for Hamiltonian Path implies that \textsf{$k$-Path} cannot be solved in $2^{o(k)}\poly(n)$ time \cite{ImpPatZan2001}.
So, it is reasonable to look for algorithms of the form $c^k\poly(n)$ time for \textsf{$k$-Path} and its variants.
However, \textsf{ETH} does not rule out  the possibility that \textsf{$k$-Path} could be solved in $1.001^k\poly(n)$ time (for example). 

In general, no strong lower bounds on the exact time complexity of \textsf{$k$-Path} are known.
However, interesting lower bounds have been proven for stronger versions of this problem.

In the \textsf{$k$-HyperPath} problem, we are given an $r$-graph,\footnote{For a positive integer $r$, recall that an \emph{$r$-graph}, also known as an \emph{$r$-uniform hypergraph} is set of vertices together with subsets of $r$ vertices known as hyperedges. 
A $2$-graph is a graph in the usual sense.} and are tasked with determining if
it contains a sequence of $k$ vertices, such that any $r$ consecutive vertices in the sequence belong to a hyperedge. Assuming the \textsf{Set Cover Conjecture (SeCoCo)}, this generalization of \textsf{$k$-Path} requires $2^{((r-2)/(r-1))k - o(k)}\poly(n)$ time to solve \cite{rkHyperPath}. 
In the \textsf{$k$-Maximum Colored Path} problem, we are given a graph whose vertices are colored, and are tasked with finding a path which passes through at least $k$ vertices with distinct colors. 
Assuming \textsf{SeCoCo}, this generalization of \textsf{$k$-Longest Path} requires  $2^{k - o(k)}\poly(n)$ time to solve \cite[Proposition 2]{shortest-cycle-sequel}.

\section{Bipartitioned Path Algorithm}
\label{sec:near-bipartite}

In the \textsf{$(\ell, k_1, \ell_2)$-Bipartitioned Path} problem, we are given a graph $G$ on $n$ nodes, whose vertices are partitioned into two parts $V_1$ and $V_2$, with distinguished vertices $s$ and $t$, and are tasked with determining if $G$ contains a simple path from $s$ to $t$ of length $\ell$, which uses exactly $k_1$ vertices from $V_1$, and exactly $\ell_2$ edges whose endpoints are both in $V_2$.
Below, we include a proof that \textsf{$(\ell, k_1, \ell_2)$-Bipartitioned Path} can be solved over undirected graphs in $2^{k_1+\ell_2}\poly(n)$ time, following the exposition from \cite[Section 10.4]{cool-parameterized-book}.

The idea behind this algorithm is to construct a polynomial whose monomials correspond to walks of length $k$ from $s$ to $t$ in $G$, some of whose vertices and edges are annotated with labels.
These labels are used in a clever way to sieve out simple paths from this set of walks.

\bipartite*

\begin{proof}
    A walk is a sequence of vertices $u_1, \dots, u_{a+1}$ and edges $e_1, \dots, e_a$ in $G$, such that for each index $i$, $e_i$ is an edge between $u_i$ and $u_{i+1}$.
    Given a set $L$, an $L$-labeled walk is a walk, together with a subsequence of its vertices and edges $f_1, \dots, f_b$, annotated with corresponding elements $c(f_1), \dots, c(f_q)$ of $L$
    (formally, each $f_j$ can be thought of as $(u_i, i)$ or $(e_i, i)$ for some index $i$).
    A vertex or edge $f_j$ annotated with an member of $L$ is called a \emph{labeled element}, and the member $c(f_j)$ of $L$ associated with the labeled element is referred to as its \emph{label}.
    
    Looking ahead, we will introduce a special class of labeled walks, and then construct a polynomial summing over these labeled walks.
    We will then argue that this polynomial is nonzero if and only if there is a solution to the \textsf{$(\ell, k_1, \ell_2)$-Bipartitioned Path} problem.

    Define $\c{W}_{L}$ to be the set of all $L$-labeled walks $W$ with the following properties:

    \begin{enumerate}
        \item The walk $W$ has length $\ell$.
    
        \item The walk $W$ begins at vertex $u_1 = s$ and ends at vertex $u_{\ell+1} = t$.
        
        \item
        For any index $i$, if $u_i\in V_2$ and $u_{i+1} \in V_1$, then $u_{i+2}\neq u_i$.
        In other words, if the walk $W$ leaves a vertex in $V_2$ for a vertex in $V_1$, it never ``immediately returns'' to the same position in $V_2$.

        \item
        The walk $W$ has exactly $k_1$ vertices in $V_1$, and $\ell_2$ edges with both endpoints in $V_2$.

        \item
        The walk $W$ contains exactly $|L|$ labeled elements, and all labels in the walk are distinct.
        Any labeled element in  $W$ is either a vertex in $V_1$ or edge with both endpoints in $V_2$.
    \end{enumerate}

    Given a walk $W\in \c{W}_{L}$, we introduce the corresponding monomial 
        \[f(W) = \prod_{i=1} ^{\ell} x_{u_i, u_{i+1}} \prod_{j=1}^{|L|} y_{f_j, c(f_j)}.\]
     Intuitively, the monomial $f(W)$ uses the $x_{u_i, u_{i+1}}$ variables to keep track of the edges in $W$ and the $y_{f_j, c(f_j)}$ variables to keep track of its labels and labeled elements.

     Now, set $L = [k_1 + \ell_2]$.
     Define the polynomial
        \[P(\vec{x},\vec{y}) = \sum_{W\in\c{W}_{L}} f(W).\]
    A simple dynamic programming argument shows that we can evaluate $P$ at a point in a given field in $2^{k_1+\ell_2}\poly(n)$ field operations.
    The dynamic programming table simply holds evaluations of polynomials which are sums of $f(W)$ for labeled walks $W$ which have length at most $\ell$, start at $s$ and end at some arbitrary vertex $v$, satisfy condition \textbf{3} above, have at most $k_1$ vertices in $V_1$, at most $\ell_2$ edges with both endpoints in $V_2$, and use pairwise distinct labels from $L$.
    The $2^{k_1+\ell_2}$ factor comes from having to perform dynamic programming over all distinct subsets of $L$, and the $\poly(n)$ factor follows from keeping track of counters for the walk length, number of vertices in $V_1$, number of edges with both endpoints in $V_2$, and a constant number of vertices near the ends of the walk $W$.

    Now, take $q = \log (n(\ell + |L|))$ and work over the field $\mathbb{F}_{2^q}$.
    Arithmetic operations over this field can be performed in $\tilde{O}(q)$ time.
    Consider the polynomial $P$ over this field.
    We claim that the net contribution of the monomials of terms in $W\in \c{W}_L$ which are not simple paths (i.e., $W$ has a repeat vertex) vanishes in $P$ over this field.

    Indeed, let $W\in \c{W}_L$ have a repeat vertex.
    There are two cases to consider.

    \paragraph*{Case 1: Repeat in $\boldsymbol{V_1}$}

    Suppose first that $W$ has a repeat vertex in $V_1$.
    Let $i$ be the smallest index such that $u_i\in V_1$ occurs at least twice in $W$. 
    Let $j$ be the smallest index such that $u_j = u_i$.

    By conditions \textbf{4} and \textbf{5}, and the fact that $|L| = k_1 + \ell_2$, we see that $u_i$ and $u_j$ are both labeled elements of $W$, with distinct labels $c_i$ and $c_j$ respectively.

    So, consider the labeled walk $W'$ which has the same vertex and edge sequence, and has the same label sequence as $W$ \emph{except} that $u_i$ has label $c_j$ and $u_j$ has label $c_i$.
    By inspection, $W'\in \c{W}_L$ and $f(W) = f(W')$.
    Moreover, applying this label swap argument to $W'$ recovers $W$.
    So this pairs up the walks $W$ and $W'$ together, and over $\mathbb{F}_{2^q}$ the contribution $f(W) + f(W')$ will vanish.

    \paragraph*{Case 2: All Repeats in $\boldsymbol{V_2}$}

    Suppose $W$ contains no repeats in $V_1$.
    Let $i$ be the smallest index such that $u_i$ occurs at least twice in $W$. 
    By assumption, $u_i\in V_2$.
    Let $j$ be the smallest index such that $u_j = u_i$. 
    In this case, consider the closed walk $C = W[u_i, u_j]$ between these first two occurrences of $u$ in $W$.

    If the sequence of vertices in $C$ is not a palindrome, we can construct the labeled walk $W'$ formed by reversing $C$ (both the vertices and corresponding labels) in $W$. 
    By inspection, $W'\in \c{W}_L$ and $f(W') = f(W)$.
    Moreover, applying this reversing procedure on $W'$ recovers $W$.
    So this pairs up the walks $W$ and $W'$ together, and over $\mathbb{F}_{2^q}$ the contribution $f(W) + f(W')$ will vanish.
    
    Otherwise, suppose that the vertices of $C$ do form a palindrome.
    Then the sequence of vertices for $C$ looks like
        \[u_i, u_{i+1}, \dots, u_{(i+j)/2}, \dots, u_{j-1}, u_j.\]
    where $u_{i+p} = u_{j-p}$ for all $p\le (j-i)/2$.
    
    Since $W$ has no repeat vertices in $V_1$, we know that 
    $u_{i+p}, u_{j-p}\in V_2$ for all $p < (j-i)/2$.
    This together with condition \textbf{3} implies that  $u_{(i+j)/2}\in V_2$.
    Thus, $C$ has 
    at least two edges with both endpoints in $V_2$.
    By conditions \textbf{4} and \textbf{5} together with the fact that $|L| = k_1 + \ell_2$, we see that all such edges are labeled in $W$.
    So, take the edges $e_i = (u_i, u_{i+1})$ and $e_{j-1} = (u_{j-1}, u_j)$ with distinct labels $c_i$ and $c_{j-1}$.
    Since $C$ is a palindrome, $e_i = e_{j-1}$.

    Thus, we can consider the labeled walk $W'$ which has the same vertex and edge sequence, and has the label sequence as $W$ \emph{except} that $e_i$ has label $c_j$ and $e_j$ has label $c_i$.
     By inspection, $W'\in \c{W}_L$ and $f(W) = f(W')$.
    Moreover, applying this label swap argument to $W'$ recovers $W$.
    So this pairs up the walks $W$ and $W'$ together, and over $\mathbb{F}_{2^q}$ the contribution $f(W) + f(W')$ will vanish.

Thus, in either case, we see that all elements of $\c{W}_L$ with repeat vertices can be paired up so that their monomials have net zero contribution over $\mathbb{F}_{2^q}$ in the definition of $P$.
On the other hand, by definition, all $W\in\c{W}_L$ with no repeat vertices (i.e., $W$ whose vertex and edge sequences correspond to simple paths) produce distinct $f(W)$.
Thus, we see that over $\mathbb{F}_{2^q}$, the polynomial $P$ is nonzero if and only if there is a solution to the 
\textsf{$(\ell, k_1, \ell_2)$-Bipartitioned Path} problem.

Take a uniform random evaluation of $P$ over $\mathbb{F}_{2^q}$.
By the Schwartz-Zippel Lemma, this evaluation is nonzero with probability at least $1 - (\ell + |L|)/2^q = 1 - 1/n$ if $P$ is nonzero (and of course is always zero if $P$ is the zero polynomial).
Thus, we can solve the problem by checking whether a uniform random evaluation of $P$ over $\mathbb{F}_{2^q}$ is nonzero, which by the preceding discussion can be done in $2^{k_1+\ell_2}\poly(n)$ time.
\end{proof}

\end{document}